\date{}
\newtheorem{theorem}{Theorem}
\newtheorem{lemma}[theorem]{Lemma}
\newtheorem{corollary}[theorem]{Corollary}
\newtheorem{claim}[theorem]{Claim}
\newtheorem{definition}[theorem]{Definition}
\newtheorem*{theorem*}{Theorem}
\newtheorem*{lemma*}{Lemma}
\newtheorem*{corollary*}{Corollary}
\newtheorem*{proposition*}{Proposition}
\newtheorem*{claim*}{Claim}
\newtheorem*{definition*}{Definition}
\newcommand{\hardz}{\ensuremath{\sum_{i,j,k,\ell\in[n]} z_{ijk\ell}x_ix_j x_k x_\ell - \beta=0}}
\newcommand{\hardzp}{\ensuremath{\sum_{i,j,k,\ell\in[n]} z_{ijk\ell}x_ix_j x_k x_\ell - \beta}}
\newcommand{\sml}{\mathrm{sml}}
\newcommand{\relrk}{\mathrm{relrk}}
\newcommand{\ks}{\mathsf{ks}}
\newcommand{\ak}{\lfloor \alpha k \rfloor}
\newcommand{\ZZ}{\mathbb{Z}}
\newcounter{enumresume}
\begin{document}

\sloppy

\title{Simple Hard Instances for Low-Depth Algebraic Proofs\thanks{This project has received funding from the European Research Council (ERC) under the European Union's
Horizon 2020 research and innovation programme (grant agreement No 101002742)".}}
%%%%%%%%%%%%%%%%%%%%%%%%%%%%%%%%%%%%%%%%%%%

\newcommand\extrafootertext[1]{%
    \bgroup
    \renewcommand\thefootnote{\fnsymbol{footnote}}%
    \renewcommand\thempfootnote{\fnsymbol{mpfootnote}}%
    \footnotetext[0]{#1}%
    \egroup
}

% Example:
%
% \author[1]{Author A}
% \author[1]{Author B}
% \author[1]{Author C}
% \author[2]{Author D}
% \author[2]{Author E}
% \affil[1]{Department of Computer Science, \LaTeX\ University}
% \affil[2]{Department of Mechanical Engineering, \LaTeX\ University}
% \affil[ ]{\textit {\{email1,email2,email3,email4,email5\}@xyz.edu}}
%

% Setting for authorblk package: determine what comes % between authors.
%
 \renewcommand\Authsep{\qquad }
 \renewcommand\Authand{ }
 \renewcommand\Authands{\qquad }

%
% %
 \author{Nashlen Govindasamy\thanks{E-mail: nashlen.govindasamy@gmail.com}}
 \author{Tuomas Hakoniemi\thanks{E-mail: t.hakoniemi@imperial.ac.uk}}
 \author{Iddo Tzameret\thanks{E-mail: iddo.tzameret@gmail.com. \url{https://www.doc.ic.ac.uk/\~itzamere/}}}
 \affil{Imperial College London\\ {\normalsize Department of Computing}}

% make the title area

\maketitle

\begin{abstract}
We prove super-polynomial lower bounds on the size of propositional proof systems operating with constant-depth algebraic circuits over fields of zero characteristic. Specifically, we show that the  subset-sum variant $\hardz$, for Boolean variables, does not have polynomial-size IPS refutations where the refutations are multilinear and written as constant-depth circuits.

Andrews and Forbes (\begin{footnotesize}STOC\end{footnotesize}'22) established recently a constant-depth IPS lower bound, but their hard instance does not have itself small constant-depth circuits, while our instance is computable already with small depth-2 circuits.

Our argument relies on extending the recent breakthrough lower bounds against constant-depth algebraic circuits by Limaye, Srinivasan and Tavenas (\begin{footnotesize}FOCS\end{footnotesize}'21) to the functional lower bound framework of Forbes, Shpilka, Tzameret and Wigderson (\begin{footnotesize}ToC\end{footnotesize}'21), and may be of independent interest. Specifically,  we construct a polynomial $f$ computable with small-size constant-depth circuits, such that  the multilinear polynomial computing $\nicefrac{1}{f}$ over Boolean values and its  \mbox{appropriate set-multilinear} projection are hard for  constant-depth circuits.
\end{abstract}

% 2) AF22 uses a reduction while we use directly the %this  Our result is stronger in the sense that s the first unconditional super polynomial lower bound against proof systems operating with constant-depth algebraic circuits, in terms of their algebraic circuit complexity and for instances with small (polynomial) magnitude coefficients (prior unrestricted IPS lower bounds were either conditional \cite{AGHT20}, or for instances with coefficients of exponential-magnitude and consequently in terms of the bit-size of the refutations in contrast to their algebraic circuit size \cite{Ale20}, or for much more restricted variants of IPS \cite{FSTW21}).

%%%%%%%%%%%%%%%%%%%%%%%%%%%%%%%%%%%%%%%%%%%%%
%%%%%%%%%%%%%%%%%%%%%%%%%%%%%%%%%%%%%%%%%%%%%
\section{Introduction}
%%%%%%%%%%%%%%%%%%%%%%%%%%%%%%%%%%%%%%%%%%%%%
%%%%%%%%%%%%%%%%%%%%%%%%%%%%%%%%%%%%%%%%%%%%%

%% Proof Complexity intro

Proof complexity predominantly aims to establish lower bounds on proof size in different proof systems.
From the perspective of complexity theory this can be viewed as the dual goal to circuit complexity. While circuit complexity aims to prove lower bounds on minimal circuit size required to decide membership in certain languages, e.g., SAT, proof complexity aims to establish lower bounds on the minimal size of proofs witnessing membership in a certain language, e.g., UNSAT, the language of unsatisfiable Boolean formulas. Here a proof is simply a witness that can be checked efficiently.

In circuit complexity it is usual to consider different or restricted types of circuits (e.g., constant-depth circuits or constant-depth circuits with counting gates modulo a prime). Similarly, in proof complexity it is standard to consider different or restricted types of proof systems, for example proofs using a prescribed set of inference rules that derive clauses from existing ones (i.e., resolution).

The ideal goal of circuit complexity is to prove that the class \NP\ is different from the class of languages decidable by polynomial-size circuits (and hence $\NP \neq \P/\poly$). Similarly, the  overarching view of proof complexity is that of an attempt to prove lower bounds on stronger and stronger proof systems in the hope to get as close as possible to ruling out the existence of  any  proof system that admits short proofs for all unsatisfiable formulas (namely, membership in UNSAT;  and similarly for proving membership in other important languages). A language that admits no short efficiently verifiable proofs is by definition outside the class \NP, from which we conclude, in the case of ruling out short proofs of  UNSAT, that $\NP\neq\coNP$ (and hence  $\NP\neq\P$). This view of proof complexity is usually called \emph{The Cook's Programme} of proof complexity.

%%%%%%%%%%%%%%%%%%%%%%%%%%%%%%%%%%%%%%%%%%%%%
%%%%%%%%%%%%%%%%%%%%%%%%%%%%%%%%%%%%%%%%%%%%%
\subsection{Algebraic Proof Systems}
%%%%%%%%%%%%%%%%%%%%%%%%%%%%%%%%%%%%%%%%%%%%%
%%%%%%%%%%%%%%%%%%%%%%%%%%%%%%%%%%%%%%%%%%%%%

One important strand of proof complexity deals with \emph{algebraic proof systems} of increasing strength. Algebraic proof systems prove that a set of multivariate polynomials do not have a common 0-1 root over a field. The arguably canonical algebraic proof system is the relatively weak Polynomial Calculus (PC for short) \cite{CEI96}
 in which proofs start from a set of polynomial equations, and proceed to add and multiply existing polynomials until one reaches the unsatisfiable equation $1=0$ (proving that the initial polynomials do not have a common 0-1 solution). The ``static'' version of the polynomial calculus is called Nullstellensatz \cite{BeameIKPP96}. In Nullstellensatz a proof of the unsatisfiability of a set of axioms, given as polynomial equations  $\{f_i(\vx)=0\}$ over a field, is simply a \emph{single} polynomial combination of the axioms that equals 1 as a formal  polynomial, namely:  
\begin{equation}
\label{eq:NS-first}
\sum_i g_i(\vx)\cd f_i(\vx) = 1\,,
\end{equation}
for some polynomials $\{g_i(\vx)\}$ (it is said to be \emph{static} because the proof is given as a single polynomial combination instead of deriving 1 ``dynamically'' step-by-step as in PC).

The size-complexity of proofs in both PC and Nullstellensatz is sparsity, namely the total number of monomials in all the polynomials appearing along the proof. The sparsity  measure is what makes these proof systems weak (e.g., even a simple proof-line like $(x_1-1)\cdots(x_n-1)=0$ accounts for an exponential size   because the number of monomials in it is $2^n$).

While counting the total number of monomials in algebraic proofs towards their size-complexity yields comparatively weak proof systems, it is natural to think of stronger algebraic proofs by representing polynomials in a more compact manner than sparsity. In particular, one can consider writing polynomials using algebraic circuits. This idea has circulated in proof complexity starting from Pitassi \cite{Pit97,Pit98}, and subsequently in Grigoriev and Hirsch \cite{GH03}, Raz and Tzameret \cite{RT06,RT07,Tza11-I&C}, and finally in the introduction of the Ideal Proof System (IPS) by Grochow and Pitassi \cite{GP18} which loosely speaking is  the  Nullstellensatz proof system in which proofs are written as algebraic circuits (indeed, \cite{FSTW21} showed that IPS is equivalent to Nullstellensatz in which the polynomials $g_i$ in \autoref{eq:NS-first} are written as algebraic circuits). 

Accordingly, it is natural to consider proof systems that sit  between the weak Nullstellensatz on the one end and the strong IPS on the other end. This is done  by writing polynomials in proofs with restricted kind of algebraic circuits, such as constant-depth circuits \cite{GH03,GP18,IMP20,AF22}, noncommutative formulas \cite{Tza11-I&C,LTW18,Tza11-I&C}, algebraic branching programs \cite{Tza11-I&C,FSTW21,Kno17}, multilinear formulas \cite{RT06,RT07,FSTW21} and very recently algebraic proofs with additional extension variables over large fields \cite{Ale21} or finite fields \cite{IMP22}.
%(roughly simulating a restricted use of circuits in proofs) \cite{IMP22}.

%%%%%%%%%%%%%%%%%%%%%%%%%%%%%%%%%%%%%%%%%%%%%
%%%%%%%%%%%%%%%%%%%%%%%%%%%%%%%%%%%%%%%%%%%%%
\subsection{State of the Art in Algebraic Proof-Size Lower Bounds}
%%%%%%%%%%%%%%%%%%%%%%%%%%%%%%%%%%%%%%%%%%%%%
%%%%%%%%%%%%%%%%%%%%%%%%%%%%%%%%%%%%%%%%%%%%%

For the \emph{weaker} end of the algebraic proof systems' hierarchy many size lower bounds are known. Beginning in the  works of Beame \textit{et al.}~\cite{BeameIKPP96} and Buss \textit{et al.}~\cite{BussIKPRS96} on Nullstellensatz, through the first Polynomial Calculus (PC) lower bound by Razborov \cite{Razb98}, and the PC subset-sum lower bound by  Impagliazzo, Pudlak and Sgall \cite{IPS99} (the simplest form of the subset sum principle, also called sometimes \textit{Knapsack}, is the unsatisfiable over 0-1 values equation  $\sum_{i=1}^n x_i -\beta=0$, for $\beta>n$), as well as many other results.

Only recently, lower bounds against \emph{stronger} algebraic proof systems were established. Forbes, Shpilka, Tzameret and Wigderson \cite{FSTW21} considered subsystems of IPS using read-once oblivious algebraic programs (roABP) and multilinear formulas over large fields. However these subsystems  are not necessarily comparable with constant-depth fragments of IPS which are the focus of the current work.

Alekseev \cite{Ale21} established  lower bounds  against the Polynomial Calculus with additional extension variables (i.e., variables that abbreviate polynomials with a single fresh variable) over large fields.  This result is quite strong, since the proof system simulates strong propositional-logic systems like extended Frege\footnote{Though the hard instance is not a CNF, and so  the lower bound  in \cite{Ale21} does not imply Extended Frege lower bounds.}. However, this proof system  is (apparently) weaker than IPS. Furthermore, the complexity of proofs in this system is measured by bit-size (i.e., coefficients of monomials in each proof-line are written using binary notation; hence, even a short proof using a small number of steps can incur an exponential blow-up if it uses coefficients of super-exponential magnitude, as shown by Alekseev). Lastly, the hard instance in \cite{Ale21} uses coefficients of exponential magnitude, and this is crucial to the lower bound argument.

Very recently, Impagliazzo, Mouli and Pitassi  \cite{IMP22} established  lower bounds against PC with restricted number of extension variables over finite fields for CNF formulas. However this proof system is apparently weaker (or incomparable to)   constant-depth IPS, and is weaker than PC with proof-lines written as constant-depth circuits, because of the restriction on the number of allowed extension variables.
\medskip

The following lower bounds form the frontiers of what is known about the complexity of strong algebraic proof systems  most relevant to our work (i.e., IPS of increasing depth, beginning from Nullstellensatz, which is equivalent to depth-2 IPS,  up to unbounded depth IPS):

\begin{enumerate}
\item[(i)] \emph{Conditional} lower bounds against IPS proofs for the Binary Value Principle $\sum_{i=1}^n 2^{i-1}x_i = -1$ (a subset sum instance with coefficients of exponential magnitude) by Alekseev, Hirsch, Grigoriev and Tzameret \cite{AGHT20}. Apart from this result being conditional, the hard instances use coefficients of exponential magnitude, and this is crucial to the lower bound argument.

\item[(ii)] Andrews and Forbes \cite{AF22} very recently proved constant-depth IPS lower bounds. However, the hard instance itself cannot be computed by a polynomial-size constant-depth circuit, and this fact is crucial to the lower bound proof.
\end{enumerate}

%%%%%%%%%%%%%%%%%%%%%%%%%%%%%%%%%%%%%%%%%%%%%
%%%%%%%%%%%%%%%%%%%%%%%%%%%%%%%%%%%%%%%%%%%%%
\subsection{Our Results}
%%%%%%%%%%%%%%%%%%%%%%%%%%%%%%%%%%%%%%%%%%%%%
%%%%%%%%%%%%%%%%%%%%%%%%%%%%%%%%%%%%%%%%%%%%%

We establish super-polynomial constant-depth IPS lower bounds for a subset sum instance with small coefficients (i.e., 0-1 coefficients) that is computable by an $O(n^5)$-size \emph{depth-2} circuits, and where the IPS proof is multilinear. 

To understand better the proof system we work against, recall the proof shown in \autoref{eq:NS-first}, in which the $g_i$'s are written as algebraic circuits---this is (equivalent) to the general IPS system. 
We shall work with the proof system multilinear \lbIPS, following the notation \lbIPS\ in \cite{FSTW21}. Proofs in  
multilinear \lbIPS\ of the unsatisfiability of $\{f_i(\vx)=0\}$ are (roughly; the  actual proof system is in fact stronger than this, see \autoref{def:IPS}) defined as  the following polynomial identity 
$$
 \sum_i g_i(\vx)\cd f_i(\vx) + \sum_jh_j(\vx)\cd (x_j^2-x_j)=1
$$
where $h_j(\vx)$ are some polynomials and the $ g_i(\vx)$'s are \emph{multilinear} polynomials, and the $g_i(\vx)$'s and $h_j(\vx)$'s are all written with constant-depth circuits (but \emph{not} necessarily multilinear formulas,  in contrast to multilinear-formula \lbIPS\ as in \cite{FSTW21}).

\begin{theorem}[Informal; see \autoref{thm:main}]\label{thm:ovr:main}
Every constant-depth multilinear \lbIPS\ refutation of the subset sum variant $\hardzp$ (for $\beta\not\in\{0,\dots,n^4\}$) requires super-polynomial (in $n$) size.
\end{theorem}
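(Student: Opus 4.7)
The plan is to combine the functional lower bound framework of Forbes, Shpilka, Tzameret and Wigderson with the recent constant-depth set-multilinear circuit lower bounds of Limaye, Srinivasan and Tavenas. First, I would use the functional reduction to convert a putative small refutation into a circuit lower bound problem: if the axiom $f(\vx) := \sum_{i,j,k,\ell} z_{ijk\ell} x_i x_j x_k x_\ell - \beta$ admits a constant-depth multilinear \lbIPS\ refutation of size $s$, then one extracts a constant-depth multilinear polynomial $g(\vx)$ of size $\mathrm{poly}(s)$ satisfying $g(\vx)\cdot f(\vx) \equiv 1$ on $\{0,1\}^n$. Because $g$ is multilinear, it must coincide with the unique multilinear polynomial agreeing with $1/f$ on the Boolean cube. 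Hence the IPS lower bound is reduced to a constant-depth circuit lower bound on this multilinear representative of $1/f$.

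Second, I would port the LST approach into the functional setting by reducing to a set-multilinear lower bound. Partition the variables into $d$ disjoint blocks $X_1,\ldots,X_d$ for a suitable constant or slowly growing $d$, and consider the set-multilinear component of $g$ with respect to this partition. A constant-depth circuit for $g$ yields, via the homogenisation and set-multilinearisation procedures of LST, a constant-depth set-multilinear circuit of comparable size for this component. It therefore suffices to prove a set-multilinear circuit lower bound on the restriction of the multilinearisation of $1/f$ to such a partition.

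The central task, and the step I expect to be the main obstacle, is to choose the $0/1$ coefficients $z_{ijk\ell}$ so that this set-multilinear projection has essentially full LST-relative rank under every refinement of the blocks into halves. Starting from the formal geometric expansion of $1/f$ in powers of $\sum z_{ijk\ell} x_i x_j x_k x_\ell / \beta$ and reducing modulo the Boolean axioms $x_j^2 - x_j$, the coefficient of a given set-multilinear monomial is a rational function of $\beta$ determined by the incidence pattern of $z$. The goal is to select $z$ so that the resulting coefficient tensor has large partial-derivative (relative) rank in the style of LST, for instance by encoding a product structure reminiscent of iterated matrix multiplication, while keeping $f$ itself a sum of at most $n^4$ degree-$4$ monomials so that it is computed by a depth-$2$ circuit of size $O(n^5)$, as promised in the abstract. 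Technically, one must verify that the combinatorial identities on $z$ needed to force high relative rank are compatible with the sparsity and simplicity required to keep $f$ depth-$2$ and the coefficients Boolean.

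Finally, once the set-multilinear hardness is in place, tracing the reductions backwards converts any small constant-depth multilinear \lbIPS\ refutation of $f=0$ into a small constant-depth circuit for a set-multilinear polynomial of high relative rank, contradicting LST and yielding the super-polynomial lower bound stated in the theorem.
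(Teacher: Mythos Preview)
Your high-level architecture is exactly the paper's: extract the multilinear $g$ with $g\equiv 1/f$ on $\{0,1\}^n$ from a putative small refutation, restrict the $z$-variables (and some $x$'s) to pass to a specific instance, apply the LST homogenisation/set-multilinearisation to get a small set-multilinear circuit for the projection $\Pi_w(g)$, and derive a contradiction from a relative-rank lower bound. So the framework is right.

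The gap is entirely in the ``central task'' you flag. Your suggested route---expand $1/f$ as a geometric series, multilinearise, and try to recognise an IMM-like tensor in the set-multilinear coefficients---is not what the paper does, and it is unclear it can be made to work: multilinearising powers of $\sum z_{ijk\ell}x_ix_jx_kx_\ell$ produces heavy cancellations, and there is no visible product structure to exploit. Also, the LST measure is not ``rank under every refinement into halves''; it is rank with respect to a single fixed word $w\in\ZZ^d$ that splits the $d$ blocks into positive and negative indices.

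The paper's actual mechanism is combinatorial and hinges on the \emph{tight} subset-sum degree bound of FSTW (their Proposition~5.3). After restriction one lands on a lifted subset sum $\ks_w$ built from a balanced word $w\in\{\lfloor\alpha k\rfloor,-k\}^d$: each positive variable $x^{(i)}_\sigma$ is paired with a product of sums of negative variables $y^{(j)}_\tau$ whose index strings overlap $\sigma$. For any monomial $m$ set-multilinear on $w|_{N_w}$, the $0/1$ assignment sending exactly the variables of $m$ to $1$ collapses $\ks_w$ to an ordinary subset sum $\sum_{i} x^{(i)}_{\sigma_i}-\beta$; the tight degree bound then forces the leading $x$-monomial of the coefficient $g_m(\vx)$ in $g=\sum_m g_m(\vx)\,m$ to be precisely the matching set-multilinear monomial $m(\sigma(m)|_{A_w^{P_w}})$. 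An induction over proper submonomials of $m$ rules out cancellation among the $g_{m'}$, and since these leading monomials exhaust $M^P_w$, the set-multilinear coefficient submatrix $M_w(g)$ is full rank. This partial-evaluation-to-subset-sum argument, together with the specific $\ks_w$ construction, is the paper's main technical contribution and is what your plan is missing.
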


% To understand better the proof system we work against, recall the proof shown in \autoref{eq:NS-first}, in which the $g_i$'s are written as algebraic circuits---this is (equivalent) to the general IPS system. 
% We shall work with the proof system multilinear \lbIPS, following the notation \lbIPS\ in \cite{FSTW21}. Proofs in  
% multilinear \lbIPS\ are (roughly; the  actual proof system is in fact stronger than this, see XXX)  written as $$
%  \sum_i g_i(\vx)\cd y_i + h(\vx,\vz),
% $$
% where $h(\vx,\vz)$ is a polynomial in $\vx,\vz$ such that $h(\vx,\vnz)=0$, and the $ g_i(\vx)$'s are multilinear polynomials, and the $g_i(\vx)$'s and $h(\vx,\vz)$ are written with constant-depth circuits (but \emph{not} necessarily multilinear formulas,  in contrast to multilinear-formula \lbIPS\ as in \cite{FSTW21}), and such that following is a polynomial identity: $$
%  \sum_i g_i(\vx)\cd f_i(\vx) + h(\vx,x_1^2-x_1,\dots,x_n^2-x_n) = 1\,.
% $$ 
% 

%%%%%%%%%%%%%%%%%%%%%%%%%%%%%%%%%%%%%%%%%%%%%
%%%%%%%%%%%%%%%%%%%%%%%%%%%%%%%%%%%%%%%%%%%%%
\para{Significance of the Results and Context.}
%%%%%%%%%%%%%%%%%%%%%%%%%%%%%%%%%%%%%%%%%%%%%
%%%%%%%%%%%%%%%%%%%%%%%%%%%%%%%%%%%%%%%%%%%%%

This is the first constant-depth IPS lower bound on an instance that is computable itself with small constant-depth circuits, when the polynomial that constitutes the IPS proof is multilinear (see \autoref{def:IPS}).
Our hard instance  has coefficients of small magnitude, and the lower bound is in the stronger unit cost model of  algebraic circuits (i.e., in terms of the size of the circuits, not the size of the binary representation of the coefficients appearing  in them).
Thus, we rectify all the purported shortcomings of previous constant-depth algebraic proofs lower bounds (while paying by requiring that the IPS proofs are (partially) multilinear).

\autoref{thm:ovr:main} contributes to the  tradition of showing that simple subset sum variants are hard for algebraic proofs. While Impagliazzo \textit{et al.}~\cite{IPS99} initially showed that the subset sum principle requires exponentially many monomials in PC refutations, and Forbes \textit{et al.}~\cite{FSTW21} extended this to roABP and multilinear formulas, we show this hardness holds at least up to constant-depth IPS (when the proofs are multilinear).

Subset sum variants are not  translations of CNF formulas or Boolean formulas more generally. Hence, algebraic-proofs lower bounds for them do not imply (immediately at least) propositional logic proof size lower bounds (i.e., Frege-style proofs). However, a major motivation behind investigating the complexity of algebraic proof systems is to understand the power of algebraic reasoning and proofs (and their algorithmic counterpart, e.g., Gr\"obner basis computations).
% in addition to shedding light and eventually to resolve major open problems in the (Boolean) proof complexity of propositional proof systems operating with logical gates and counting gates modulo a prime (mod p gates, namely $\ACZ[p]$-Frege proofs).
For this purpose, it is enough to prove lower bounds on  hard instances that are not necessarily translations of CNFs or Boolean formulas. Indeed many works on the complexity of algebraic proof systems are dedicated to establishing such lower bounds, most prominently the subset sum principle, and its variants (see also Razborov \cite{Razb98} non-CNF pigeonhole principle).

Furthermore, lower bounds on the size of algebraic proofs of subset-sum instances, and generally instances  from the  language of unsatisfiable 0-1 multivariate polynomials over a field, are as relevant to  the Cook's programme mentioned above as much as Boolean formulas. The reason is that this language is a \coNP-complete language, since we can efficiently check if a given 0-1 assignment satisfies all the polynomials in the system (assuming the polynomials and field elements are written   in some standard way), and Boolean unsatisfiability is easily reducible to this language.
% Thus,  lower bounds on algebraic proofs (even for non-CNF instances) is a legitimate part of Cook's programme, mentioned above.

We explain in what follows our main technical contribution, which can be of independent interest.

%%%%%%%%%%%%%%%%%%%%%%%%%%%%%%%%%%%%%%%%%%%%%
%%%%%%%%%%%%%%%%%%%%%%%%%%%%%%%%%%%%%%%%%%%%%
\subsection{Proof Technique}
%%%%%%%%%%%%%%%%%%%%%%%%%%%%%%%%%%%%%%%%%%%%%
%%%%%%%%%%%%%%%%%%%%%%%%%%%%%%%%%%%%%%%%%%%%%

Our proof draws techniques from two sources. We use the methods presented in Limaye, Srinivasan and Tavenas  \cite{LST21} to prove superpolynomial lower bounds for constant-depth algebraic circuits, and combine these with the functional lower bound framework of Forbes, Shpilka, Tzameret and Wigderson \cite{FSTW21} for size lower bounds on IPS proofs (see also \cite{FKS16} for the functional lower bound approach in algebraic circuit complexity in general).

In general, we prove \autoref{thm:ovr:main} by  reducing the task of lower bounding the size of a constant-depth algebraic circuit  computing the multilinear polynomial that constitutes the IPS proof  into the following task: lower bound the size of a  constant-depth \emph{set-multilinear} \emph{circuit} computing an  associated \emph{set-multilinear polynomial}. To get the new associated set-multilinear polynomial from the original multilinear IPS proof (which is not necessarily set-multilinear by itself) we use a variant of the functional lower bound approach with some additional arguments that we introduce to deal with the need to focus on set-multilinear monomials  within a polynomial that is not set-multilinear. 

Once, we have the associated set-multilinear polynomial we can use the reduction presented in \cite{LST21} from constant-depth general circuits to constant-depth set-multilinear circuits. The reduction loses only a constant-factor in the depth, but pays quite heavily in the degree of the set-multilinear output polynomial. So in order to keep the size of the obtained set-multilinear circuit reasonable, we need to restrict the degree of the set-multilinear polynomial considerably.

%\textcolor[rgb]{0.690196,0.690196,0.690196}{Tuomas: we have an f that is the refutation-modulo of the z'ss (the f from theorem 1). By assigning some of the z we get both the KS-w and accordingly a  refutaion modulo  of the KS-w axioms. Now we use the functional lower bound approach of FSTW21. }

Notice that unlike in Limaye \textit{et al.}~\cite{LST21} (or in circuit complexity in general), we do not work from the get-go with a set-multilinear polynomial for which we need to prove a lower bound against constant-depth circuits computing it. In our case, we need to somehow show that this set-multilinear polynomial is ``embedded'' in some way in \emph{any} multilinear IPS proof of our hard instance. This is the main technical challenge we face in this work.

Another point is that to show our simple  degree-2 instance from  \autoref{thm:ovr:main} is hard we use a substitution in this simple instance. Specifically, by assigning some of the $z_{ijk\ell}$ and $x_i$ variables in the hard instance, we show that one gets another variant of subset sum denoted $\ks_w$ ($\ks$ stands for Knapsack) that is defined with respect to some word $w\in\ZZ^d$ (in the sense of \cite{LST21}). The definition of the subset sum over $w$ is designed so that the multilinear IPS refutation  of the simple hard instance from Theorem 1 ``embeds''  $P_w$ after applying the substitution to the $z_{ijk\ell}$ (and $x_i$) variables, where $P_w$ is the word polynomial from \cite{LST21}, which induces a full-rank coefficient matrix on a set of all set-multilinear monomials that arise from the given word $w$ (a coefficient matrix of a polynomial is an associated matrix whose rank serves as a complexity measure for the polynomial's circuit size, and in which each entry is a coefficient of a specific monomial in the polynomial). 

%\textcolor{red}{is the homogeneous polynomial containing all possible set-multilinear monomials with respect to  a variable-partition (based on $w$) from \cite{LST21}, meaning this $P_w$  induces a full-rank coefficient matrix (with respect to $w$}%

The meaning of a polynomial ``embedding'' a set-multilinear polynomial refers to the set-multilinear polynomial being the set-multilinear \emph{projection} of the original polynomial. Hence, we need to consider the \emph{projection} to the space of all set-multilinear polynomials over a particular variable-partition of our original polynomial. We extend the evaluation dimension method from \cite{FSTW21} to prove a rank lower bound for the coefficient matrix of this set-multilinear projection, which yields the set-multilinear circuit lower bound via a lemma from \cite{LST21}.

%\textcolor[rgb]{0.815686,0.815686,0.815686}{The set-multilinear %polynomial we reduce to arises from a suitable substitution instance of $f$. An immediate obstacle here is the fact that we do not have an explicit representation of $f$. However the \textcolor{green}{substitution????} instance is itself a refutation of some substitution instance of the polynomial $\hardzp$, and  the idea is to use techniques from \cite{FSTW21} to prove a lower bound for this refutation. The substitution instance of $f$  is however not set-multilinear itself and thus we need to consider its \emph{projection} to the space of all set-multilinear polynomials over a particular variable-partition. We extend the evaluation dimension method from \cite{FSTW21} (cf.~\textcolor{green}{xxx}) to prove a rank lower bound for the coefficient matrix of this set-multilinear projection, which yields the set-multilinear circuit lower bound via a lemma from \cite{LST21}. }

\medskip

To add on the above, our proof diverges from its forbears in some essential ways. Firstly, as mentioned before, when \cite{LST21} can work from the get-go with a low-degree set-multilinear polynomial, the multilinear refutations we consider are not of low-degree nor are they set-multilinear. Thus we need to find suitable set-multilinear polynomials within the refutations, and consider projections to the space of set-multilinear polynomials with respect to some variable-partition. Secondly, we use the method based on partial assignments (or evaluations) from \cite{FSTW21} to prove our rank lower bound. Our use of these partial assignments is however more subtle than the evaluation dimension method of \cite{FSTW21}. 

Forbes \textit{\textit{et al.}}~\cite{FSTW21} showed a rank lower bound against a coefficient matrix of a polynomial by reducing it to \emph{dimension} lower bound for the space of all multilinearizations of the polynomials obtained after  appropriate partial assignments (to the $\vy$-variables of a polynomial in both $\vx$ and $\vy$ variables; see \cite{FSTW21}). The dimension lower bound is then proved by showing that enough linearly independent  monomials appear in the space as leading monomials. Such an argument is not enough in our case, because we need to prove the existence of specific set-multilinear monomials (with respect to to a words $w$) in the original polynomial (namely, the IPS proof polynomial). The root of this difference is that the original argument defines partial assignments over \emph{all} the monomials in the hard polynomial,  while we try to argue for a rank lower bound for the coefficient matrix of only a \emph{part }of the monomials, namely only the  set-multilinear monomials within a non-set-multilinear polynomial---and this corresponds to only a \emph{submatrix} of the full coefficient matrix. The drawback here is that this also forces us to consider only multilinear refutations (because multilinearization as used in \cite{FSTW21} does not increase rank but preserves the high rank of the coefficient matrix of the original polynomial [before multilinearization], while multilinearization of  a polynomial \emph{can} increase the rank of the set-multilinear coefficient \emph{sub}-matrix; e.g., multilinearizing the \mbox{non-set-multilinear} monomial $x^2_1y_2^3$ produces $x_1 y_2$ which is set-multilinear, assuming the variables are partitioned into $\vx$ and $\vy$ variables).

%\textcolor[rgb]{0.752941,0.752941,0.752941}{The FSTW arguments do show a rank lower bound also for non-multilinear polynomials. The multilinearization occurs in the reduction from rank to evaluation dimension lower bound. We cannot do this as then the columns and rows outside our submatrix interfere with the argument.}

%  -  \textcolor[rgb]{0.690196,0.690196,0.690196}{We on the other hand argue that the leading monomial of the coefficient of any sml monomial in y variables is the corresponding sml monomial in x-variables.} \textcolor[rgb]{0.752941,0.752941,0.752941}{So it is actually even more specific than what I've wrote. And the reason why we need to get so nitty gritty is the fact that we consider a submatrix of the coefficient matrix.}

Note also that our proof technique is completely different from Andrews and Forbes \cite{AF22} who used the \emph{hard multiples} framework from \cite{FSTW21}.

\section{Preliminaries}
%%%%%%%%%%%%%%%%%%%%%%%%%%%%%%%%%%%%%%%%%%%%%
%%%%%%%%%%%%%%%%%%%%%%%%%%%%%%%%%%%%%%%%%%%%%

\subsection{ Polynomials and Algebraic Circuits}\label{sec:algebraic_circuits}

For excellent treatises on algebraic circuits and their complexity see Shpilka and Yehudayoff \cite{SY10} as well as Saptharishi \cite{Sap17-survey}.
 Let \G\ be a ring. Denote by $\G[X]$ the ring of (commutative) polynomials with coefficients from $ \G $ and variables $X:=\{x_1,x_2,\,\dots\,\}$. A \emph{polynomial }is a formal linear combination of monomials, where a \emph{monomial} is a product of variables. Two polynomials are \emph{identical }if all their monomials have the same coefficients. 

The (total) degree of a monomial is the sum of all the powers of variables in it. The (total) \emph{degree} of a polynomial is the maximal total degree of a monomial in it. The degree of an \emph{individual} variable in a monomial is its power. The \emph{individual degree} of a monomial is the maximal individual degree of  its variables. The individual degree of a polynomial is the maximal individual degree of its monomials. For a polynomial $f$ in $\G[X,Y]$ with $X,Y$ being pairwise disjoint sets of variables,  the \emph{individual $Y$-degree} of $f$ is the maximal individual degree of a $Y$-variable only in $f$.

Algebraic circuits and formulas over the ring \G\   compute polynomials in $\G[X]$ via addition and multiplication gates, starting from the input variables and constants from the ring. More precisely, an \emph{algebraic circuit} $C$ is a finite directed acyclic graph (DAG) with \textit{input nodes} (i.e., nodes  of in-degree zero) and a single \textit{output node} (i.e.,  a node of out-degree zero).  Edges are labelled by ring \G\ elements.  Input nodes are labelled with variables or scalars from the underlying ring. In this work (since we work with constant-depth circuits) all  other nodes have unbounded \emph{fan-in} (that is, unbounded in-degree) and are labelled by either an addition gate $+$ or a product gate $\times$.
Every node in an algebraic circuit $C$ \emph{computes} a polynomial in $\G[X]$ as follows: an input node computes  %
% and that moreover the two edges going into a gate $v$   labeled with  $\times$ % or $+$ are
%  labeled with  \emph{left} and \emph{right}.  This
%is to determine the order of addition and multiplication\footnote{Although ultimately, %addition and multiplication are commutative. }.
%
the variable or scalar that   labels  it. A $+$ gate
 computes the linear combination of all the polynomials computed by its incoming nodes, where the coefficients of the linear combination are determined by the corresponding incoming edge labels. A $\times$ gate computes the product of all the polynomials computed by its incoming nodes (so edge labels in this case are not needed). The polynomial computed by a node $u$ in an algebraic circuit $C$ is denoted $\widehat u$. Given a circuit $C$, we denote by $\widehat C$ the polynomial computed by $C$, that is, the polynomial computed by the output node of $C$.  The \emph{\textbf{size}} of a circuit $C$ is the number of nodes in it, denoted $|C|$, and the \emph{\textbf{depth}} of a circuit is the length of the longest directed path in it (from an input node to the output node). The \textbf{\emph{product-depth }}of the circuit is the maximal number of product gates in a directed path from  an input node to the output node.

% an algebraic circuit $C$ we write $C(a/x)$ to denote the \emph{substitution instance} of $C$ in which every occurrence of the node $x$ is replaced by the sub-circuit $a$; in case  $C(x)$ is written with its displayed variable(s) $x$ we can write $C(x)(a/x)$ for this substitution instance.

We say that a polynomial is \emph{homogeneous} whenever every monomial in it has the same (total) degree. We say that a polynomial is \emph{multilinear} whenever the individual degrees of each of  its variables are at most 1. 

Let $\overline{X} = \langle X_1,\ldots,X_d\rangle$ be a sequence of pairwise disjoint sets of variables, called \emph{variable-partition}. We call a monomial $m$ in the variables $\bigcup_{i\in [d]}X_i$  \emph{set-multilinear} over the variable-partition $\overline{X}$ if it contains exactly one variable from each of the sets $X_i$, i.e. if there are $x_i\in X_i$  for all $i\in [d]$ such that $m = \prod_{i\in [d]}x_i$. A polynomial $f$ is set-multilinear over $\overline{X}$ if it is a linear combination of set-multilinear monomials over $\overline{X}$. For a sequence $\overline{X}$  of sets of variables, we denote by $\FF_{\sml}[\overline{X}]$ the space of all polynomials that are set-multilinear over $\overline{X}$.

We say that an algebraic circuit $C$ is set-multilinear over $\overline{X}$ if $C$ computes a polynomial that is set-multilinear over $\overline{X}$, and each internal node of $C$  computes a polynomial that is set-multilinear over some sub-sequence of $\overline{X}$.

\subsection{Strong Algebraic Proof Systems}
For a survey about algebraic proof systems and their relations to algebraic complexity see the survey \cite{PT16}.
Grochow and Pitassi~\cite{GP14}  suggested the following algebraic proof system  which is essentially a Nullstellensatz proof system (\cite{BeameIKPP96}) written as an algebraic circuit. A proof in the  Ideal Proof System is given as  a \emph{single} polynomial. We provide below the \emph{Boolean} version of  IPS (which includes the Boolean axioms), namely the version that establishes the unsatisfiability over 0-1 of a set of polynomial equations.  In what follows we follow the notation in \cite{FSTW21}:

%\iddo{shall we fix the field from now on to \reals?}\hirsch{No. %First of all, should we first formulate it without %$x^2-x$? (In this case we would need an algebraically %closed field.) Second, we will be talking about $\mathbb{Q}$ %at least.}

% This is potentially much more powerful as there are polynomials such as the determinant which are of high degree and involve exponentially many monomials and yet can be computed by small algebraic circuits. They named the resulting system the \emph{Ideal Proof System (IPS)} which we now define.

\begin{definition}[Ideal Proof System (IPS),
Grochow-Pitassi~\cite{GP14}]\label{def:IPS} Let $f_1(\vx),\ldots,f_m(\vx),p(\vx)$ be a collection of polynomials in $\F[x_1,\ldots,x_n]$ over the field \F. An \demph{IPS proof of $p(\vx)=0$ from axioms $\{f_j(\vx)=0\}_{j=1}^m$}, showing that $p(\vx)=0$ is semantically  implied from the assumptions $\{f_j(\vx)=0\}_{j=1}^m$ over $0$-$1$ assignments, is an algebraic circuit $C(\vx,\vy,\vz)\in\F[\vx,y_1,\ldots,y_m,z_1,\ldots,z_n]$ such that (the equalities in what follows stand for  formal polynomial identities\footnote{That is, $C(\vx,\vnz,\vnz)$ computes the zero polynomial and $C(\vx,f_1(\vx),\ldots,f_m(\vx),x_1^2-x_1,\ldots,x_n^2-x_n)$ computes the polynomial $p(\vx)$.}):
        \begin{enumerate}
                \item $C(\vx,\vnz,\vnz) = 0$; and\vspace{-5pt}
                \item $C(\vx,f_1(\vx),\ldots,f_m(\vx),x_1^2-x_1,\ldots,x_n^2-x_n)=p(\vx)$.
        \end{enumerate}
        The \demph{size of the IPS proof} is the size of the circuit $C$. An \IPS\ proof  $C(\vx,\vy,\vz)$ of  $1=0$ from $\{f_j(\vx)=0\}_{j\in[m]}$ is called an \demph{IPS refutation} of $\{f_j(\vx)=0\}_{j\in[m]}$ (note that in this case  it must hold that  $\{f_j(\vx)=0\}_{j=1}^m$ have no common solutions in $\bits^n$).
If $\widehat C$ (the polynomial computed by $C$) is of individual degree $\le 1$ in each $y_j$ and $z_i$, then this is a \demph{linear} IPS refutation (called \emph{Hilbert} IPS by Grochow-Pitassi~\cite{GP18}), which we will abbreviate as \lIPS. If $\widehat C$ is of individual degree $\le 1$ only in the $y_j$'s then we say this is an \lbIPS refutation (following \cite{FSTW21}). If  $\widehat C(\vx,\vy,\vnz)$ is of individual degree $\le 1$ in each $x_j$ and $y_i$ variables, while $\widehat C(\vx,\vnz,\vz)$ is not necessarily multilinear, then this is a \demph{multilinear} \lbIPS\ refutation. %\mar{italic no lips'}  

If $C$ is of depth at most $d$, then this is  called a depth-$d$ \IPS\ refutation, and further called a depth-$d$ \lIPS refutation if $\widehat C$ is linear in $\vy,\vz$, and a depth-$d$ \lbIPS refutation if $\widehat C$ is linear in $\vy$, and depth-$d$ multilinear \lbIPS\ refutation if $\widehat C(\vx,\vy,\vnz)$ is linear in $\vx,\vy$. 
\end{definition}

Notice that the definition above adds the  equations $\{x_i^2-x_i=0\}_{i=1}^n$, called the  \demph{Boolean axioms} denoted  $\vx^2-\vx$, to the system $\{f_j(\vx)=0\}_{j=1}^m$. This allows  to refute over $\bits^n$ unsatisfiable systems of equations. The variables $\vy,\vz$ are  called the \emph{placeholder} \emph{variables} since they are used as placeholders for the axioms. Also, note that the first equality in the definition of IPS means that the polynomial computed by $C$ is in the ideal generated by $\overline y,\overline z$, which in turn, following the second equality, means that $C$ witnesses the fact that $1$ is in the ideal generated by $f_1(\vx),\ldots,f_m(\vx),x_1^2-x_1,\ldots,x_n^2-x_n$ (the existence of this witness, for unsatisfiable set of polynomials, stems from the Nullstellensatz theorem \cite{BeameIKPP96}).    
\medskip 

In this work we focus on multilinear \lbIPS\ refutations. This proof system is complete because its \emph{weaker} subsystem multilinear-formula \lbIPS\  was shown in \cite[Corollary 4.12]{FSTW21} to be complete (and to simulate Nullstellensatz with respect to sparsity by already depth-2 multilinear \lbIPS\ proofs). 

To build an intuition for multilinear \lbIPS\ it is useful to consider a subsystem of it in which refutations are written as 
$$
C(\vx,\vy,\vz)= \sum_i g_i(\vx)\cd y_i + C'(\vx,\vz),$$ where $\widehat C'(\vx,\vnz)=0$ and the $\widehat g_i$'s are multilinear. Note indeed that $C(\vx,\vnz,\vnz)=0$ so that the first condition of IPS proofs holds, and that $C(\vx,\vy,\vnz)$ is indeed multilinear in $\vx,\vy$.

%In fact, \emph{we are going to assume a stronger version of multilinear} \lbIPS\ in which a proof $C(\vx,\vy,\vz)$ is such that $\widehat C(\vx,\vy,\vnz) $ is multilinear (in both $\vx$ and $\vy$), while $\widehat C(\vx,0,\vz) $ is \emph{not necessarily} multilinear. 

\medskip 

\textbf{Important remark}: Unlike the multilinear-formula \lbIPS\ in \cite{FSTW21}, in  multilinear \lbIPS\ refutations $C(\vx,\vy,\vz)$ we do \emph{not} require that the refutations are written as multilinear \emph{formulas} or multilinear \emph{circuits}, only that the \emph{polynomial} \emph{computed }by $C(\vx,\vy,\vnz)$ is multilinear, hence the latter proof system easily simulates the former. 

%--- Note that FSTW proved lower bounds via the function lower bounds approach also for IPS-LIN, and we only reqyuire also the polynomial to be multilinear.

% %%%%%%%%%%%%%%%%%%%%%%%%%%%%%%%%%%%%%%%%%%%%%
% %%%%%%%%%%%%%%%%%%%%%%%%%%%%%%%%%%%%%%%%%%%%%
% \subsection{F\textcolor{green}{urther Notatio}n}
% %%%%%%%%%%%%%%%%%%%%%%%%%%%%%%%%%%%%%%%%%%%%%
% %%%%%%%%%%%%%%%%%%%%%%%%%%%%%%%%%%%%%%%%%%%%%
% 
% ...

\subsection{Set-Multilinear Monomials over a Word}

We recall some notation from \cite{LST21}. Let $w\in\ZZ^d$  be a word. For a subset $S\subseteq  [d]$ denote by $w_S$  the sum $\sum_{i\in S} w_i$, and by $w|_S$  the \textbf{subword} of $w$  indexed by the set $S$. Let\footnote{The $P_w$ here is not to be confused with the canonical full-rank set-multilinear polynomial in \cite{LST21} denoted as well by $P_w$ mentioned in the introduction.}
\[
P_w := \{i\in [d] : w_i\geq 0\}
\]
be the set of \textbf{positive indices} of $w$ and let
\[
N_w := \{i\in [d] : w_i < 0 \}
\]
be the set of \textbf{negative indices} of $w$.

Given a word $w$, we associate with it a sequence $\overline{X}(w) = \langle X(w_1),\ldots,X(w_d)\rangle$ of sets of variables, where for each $i\in [d]$  the size of $X(w_i)$ is $2^{|w_i|}$. We call a monomial set-multilinear over a word $w$ if it is set-multilinear over the sequence $\overline{X}(w)$.

%?
%
%?
%
%\textcolor{green}{Tumoas to define here setmultilinear monomials and polynmalisl  circuits!!!!!!!!r
%}
  For a word $w$, let $\Pi_w$ denote the projection onto the space $\FF_\sml[\overline{X}(w)]$ that maps the set-multilinear monomials over $w$  identically to themselves and all other monomials to $0$.

\subsection{Relative Rank}

Let $M^P_w$  and $M^N_w$  denote the set-multilinear monomials over $w|_{P_w}$ and $w|_{N_w}$, respectively. Let $f\in\FF_{\sml}[\overline{X}(w)]$ and denote by $M_w(f)$   the matrix with rows indexed by $M^P_w$ and columns indexed by $M^N_w$, whose $(m,m')$th entry is the coefficient of the monomial $mm'$ in $f$.

For any $f\in\FF_{\sml}[\overline{X}(w)]$ define the \textbf{relative rank} with respect to $w$ as follows
\[
\relrk_w(f) = \frac{\rank(M_w(f))}{\sqrt{|M^P_w|\cdot|M^N_w|}}.
\]

\subsection{Monomial Orders}

Finally we recall some basic notions related to monomial orders. For an in-depth introduction see \cite{CoxLittleOShea07}. A monomial order (in a polynomial ring $\FF[X]$) is a well-order $\leq$ on the set of all monomials that respects multiplication:
\[
\text{if }m_1\leq m_2\text{, then } m_1m_3\leq m_2m_3\text{ for any }m_3.
\]
It is not hard to see that any monomial order extends the submonomial relation: if $m_1m_2 = m_3$  for some monomials $m_1,m_2$ and $m_3$, then $m_1\leq m_3$. This is essentially the only property we need of monomial orderings, and thus our results work for any monomial ordering. Given a polynomial $f\in\FF[X]$, the leading monomial of $f$, denoted $\LM(f)$, is the highest monomial with respect to $\leq$ that appears in $f$  with a non-zero coefficient.

%%%%%%%%%%%%%%%%%%%%%%%%%%%%%%%%%%%%%%%%%%%%%
%%%%%%%%%%%%%%%%%%%%%%%%%%%%%%%%%%%%%%%%%%%%%

%%%%%%%%%%%%%%%%%%%%%%%%%%%%%%%%%%%%%%%%%%%%%
%%%%%%%%%%%%%%%%%%%%%%%%%%%%%%%%%%%%%%%%%%%%%
\section{The Lower Bound}\label{sec:lower bound}
%%%%%%%%%%%%%%%%%%%%%%%%%%%%%%%%%%%%%%%%%%%%%
%%%%%%%%%%%%%%%%%%%%%%%%%%%%%%%%%%%%%%%%%%%%%

Our main theorem is as follows:

\begin{theorem}[Main]\label{thm:main}
Let $n,\Delta\in\NN_+$ with $\Delta\leq \nicefrac{1}{4}\log\log\log n$, and assume that $\chara(\FF) = 0$. Then any product-depth at most $\Delta$ multilinear \lbIPS\ refutation of the subset sum variant $\hardzp$ (for $\beta\not\in\{0,\dots,n^4\}$) is of size at least $ n^{(\log n)^{\exp(-O(\Delta))}}$.
\end{theorem}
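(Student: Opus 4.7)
The plan is to combine the functional lower bound framework of \cite{FSTW21} with the constant-depth set-multilinear circuit lower bound machinery of \cite{LST21}. Given any product-depth $\Delta$ multilinear \lbIPS\ refutation $C(\vx, y, \vz)$ of $f(\vx) - \beta = 0$ where $f = \sum_{i,j,k,\ell\in[n]} z_{ijk\ell} x_i x_j x_k x_\ell$, linearity in $y$ (there is a single placeholder since there is a single axiom) lets one write $C(\vx, y, \vz) = g(\vx)\cdot y + h(\vx, \vz)$, with $g$ multilinear of size and product-depth at most those of $C$ and $h(\vx,\vnz)=0$. The IPS identity then forces $g(\vx)\cdot (f(\vx) - \beta) \equiv 1$ on $\bits^n$, so $g$ is (up to the Boolean axioms) the multilinear extension of $1/(f-\beta)$ to $\bits^n$.

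I would first pick a word $w \in \ZZ^d$ in the parameter range used by \cite{LST21}, with $d \approx \log n$ and $\|w\|_\infty$ small enough to keep $\sum_{i\in[d]} 2^{|w_i|}$ polynomial in $n$, and then design a substitution of the $\vz$ and $\vx$ variables that converts $f(\vx) - \beta$ into a knapsack-style instance $\ks_w$ over new Boolean variables partitioned as $\overline{X}(w) = \langle X(w_1),\dots,X(w_d)\rangle$. The quartic structure of $f$ with $0$-$1$ coefficients gives enough flexibility to realize $\ks_w$ as such a substitution, preserving the multilinear \lbIPS\ structure. The resulting multilinear polynomial $g'$ inherits the size and product-depth bounds of $g$ and satisfies $g' \cdot \ks_w \equiv 1$ on Boolean inputs.

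The central step is a lower bound on $\relrk_w(\Pi_w(g'))$, where $\Pi_w$ is the projection onto the set-multilinear component over $\overline{X}(w)$. I would extend the evaluation dimension method of \cite{FSTW21} by fixing a monomial order and, for each $m \in M^P_w$, producing a Boolean partial assignment $\sigma$ to the variables in $X(w|_{N_w})$ so that the leading set-multilinear monomial of $\Pi_w(g'|_\sigma)$ is exactly $m$. Since $g'$ agrees with $1/\ks_w$ on Boolean inputs, these leading monomials can be read off from the combinatorial structure of $\ks_w$, and a careful choice of the encoding ensures that the resulting leading monomials exhaust $M^P_w$, yielding $\relrk_w(\Pi_w(g')) = \Omega(1)$. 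The LST depth-reduction then converts any product-depth $\Delta$ circuit for $g'$ into a set-multilinear product-depth $O(\Delta)$ circuit for $\Pi_w(g')$ with size polynomially related to the original, and the set-multilinear lower bound of \cite{LST21}, combined with the relative-rank bound, yields the claimed size lower bound $n^{(\log n)^{\exp(-O(\Delta))}}$.

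The hard part will be the rank lower bound itself. The \cite{FSTW21} method uses multilinearization as a rank-preserving operation to control leading monomials of partial evaluations; however, multilinearization can \emph{inflate} the rank of a set-multilinear submatrix (e.g., $x_1^2 y_2^3$ multilinearizes to the set-multilinear monomial $x_1 y_2$), so it cannot be used directly here. This is precisely why the theorem is restricted to multilinear \lbIPS: the refutation's $\vx$-part is already multilinear, so no multilinearization is needed. Even so, one must carefully calibrate the monomial order, the word $w$, the substitution defining $\ks_w$, and the partial assignments $\sigma$ so that set-multilinear leading monomials are not masked by non-set-multilinear ones in $g'$; this interplay is the most delicate point in the argument.
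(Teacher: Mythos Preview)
Your plan matches the paper's approach closely: extract the multilinear $g$ from the refutation, substitute to reach an instance $\ks_w$, lower-bound the rank of the set-multilinear submatrix $M_w(g')$, and feed this into the \cite{LST21} depth-reduction and set-multilinear lower bound.

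One refinement is worth flagging in the rank step. Your formulation analyses leading monomials of $\Pi_w(g'|_\sigma)$ for Boolean partial assignments $\sigma$ to the negative variables, but $g'|_\sigma$ equals $\sum_{m'\,\mid\, m_\sigma} g_{m'}(\vx)$, a sum over \emph{all} (not only set-multilinear) negative submonomials $m'$ of the monomial $m_\sigma$ picked out by $\sigma$; hence knowing these leading monomials does not directly yield the rank of the submatrix $M_w(g')$. The paper instead writes $g'=\sum_m g_m(\vx)\,m$ over all multilinear negative monomials $m$ and proves, by induction on $|T|$ for $m$ set-multilinear on $w|_T$, that $\LM(g_m)\le m(\sigma(m)|_{A_w^S})$ with $S$ maximal so that $A_w^S\subseteq B_w^T$, with equality when $T=N_w$. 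The induction uses the tight degree bound of \autoref{lem:FSTW21-deg-lb} at each step, and the equality case requires an additional structural condition on $w$ that the paper calls \emph{balanced} (every $A_w^{(i)}$ overlaps some $B_w^{(j)}$ and vice versa); without it the equality claim can fail. With this in place, the columns of $M_w(g')$ indexed by $M^N_w$ have pairwise distinct set-multilinear leading monomials covering all of $M^P_w$, which gives full rank of $M_w(g')$ and hence $\relrk_w(\Pi_w(g'))\ge 2^{-k/2}$. The remainder of your outline (choosing $d\approx \log n$, $k\approx \log n$, applying the set-multilinearisation of \cite{LST21}, and invoking their relative-rank upper bound) is exactly what the paper does.
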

 
Note that, when $\Delta \geq \omega(\log\log\log n)$, the lower bound above becomes trivial. To prove \autoref{thm:main} we need the following theorem, which is  proved in the sequel:
\begin{theorem}\label{theorem:hardz-lower-bound}
        Let $n,\Delta\in\NN_+$ with $\Delta\leq \nicefrac{1}{4}\log\log\log n$, and assume that $\chara(\FF) = 0$. Let $f$ be the multilinear polynomial such that 
        \[
        f = \frac{1}{\hardzp}\text{\quad over Boolean assignments.}
        \]
        Then, any circuit of product-depth at most $\Delta$ computing $f$ has size at least
 
        \[
        n^{(\log n)^{\exp(-O(\Delta))}}\,.
        \]
\end{theorem}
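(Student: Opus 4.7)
The plan is to reduce the task of computing $f$ by a small product-depth-$\Delta$ circuit to that of computing an associated set-multilinear polynomial by a small set-multilinear constant-depth circuit, and then invoke the Limaye--Srinivasan--Tavenas framework~\cite{LST21}. First I would choose a word $w\in\ZZ^d$ with $d$ of order $\log n$ and with $|w_i|$'s balanced so that $\sum_{i\in[d]} 2^{|w_i|}\leq n/2$ and so that $|M^P_w|\approx |M^N_w|$. I then design a Boolean substitution $\sigma$ that sets the $\vx$-variables outside $\bigcup_i X(w_i)$ to $1$ (``spectators'') and the $\vz$-variables to values in $\{0,1\}$ so that
\[
\hardzp|_\sigma \;=\; \ks_w \;:=\; \sum_{i\in[d]}\sum_{x\in X(w_i)} \alpha_{i,x}\cdot x \;-\; \beta,
\]
with carefully tuned positive-integer coefficients $\alpha_{i,x}$. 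Each $\alpha_{i,x}$ can be realized as any integer up to roughly $n^3$, obtained as the number of $\vz$-indexing quadruples whose last position is $x$ and whose other three positions target (distinct) spectators; flexibility in choosing which spectators and $\vz$-indices survive allows us to realize arbitrary such integer profiles. We also choose $\beta$ outside the Boolean range of $\ks_w$ so that $\ks_w$ never vanishes on $\{0,1\}$-inputs.

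Since $f$ is the unique multilinear polynomial computing $1/\hardzp$ over Boolean assignments and $\sigma$ is Boolean, $f|_\sigma$ is the unique multilinear polynomial computing $1/\ks_w$ over the surviving variables, and any product-depth-$\Delta$ circuit of size $s$ for $f$ restricts to one of size $\leq s$ for $f|_\sigma$. The heart of the argument is then the relative rank lower bound
\[
\relrk_w\bigl(\Pi_w(f|_\sigma)\bigr) \;\geq\; \Omega(1).
\]
By M\"obius inversion, the coefficient of a set-multilinear monomial $\prod_{i\in[d]} x_{i,a_i}$ in $f|_\sigma$ equals $\sum_{S\subseteq[d]}(-1)^{d-|S|}/\bigl(\sum_{i\in S}\alpha_{i,a_i}-\beta\bigr)$. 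To lower bound the rank of the coefficient matrix $M_w$ I would extend the evaluation dimension method of~\cite{FSTW21}: for each set-multilinear Boolean assignment $\vec a^N$ to the negative block $\bigcup_{i\in N_w} X(w_i)$, analyze the resulting multilinear polynomial in the positive-block variables, project it onto the set-multilinear subspace, and show that the resulting vectors span a subspace of dimension $\Omega(|M^N_w|)$. Tuning the $\alpha$'s so that the partial sums $\sum_{i\in S} \alpha_{i,a_i}$ (for $S\subseteq [d]$) are pairwise distinct and avoid $\beta$, and appealing to Cauchy-determinant-style arguments on the leading matrix in this sum, yields linear independence and thus the rank lower bound.

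Having secured the rank lower bound, I would extract the homogeneous degree-$d$ component of $f|_\sigma$ via the standard scaling-and-Vandermonde-interpolation trick, which turns a product-depth-$\Delta$ circuit of size $s$ into one of product-depth $\Delta+O(1)$ and size $\poly(n,s)$. Applying the LST depth-preserving reduction~\cite{LST21} from general constant-depth circuits to set-multilinear constant-depth circuits then yields a set-multilinear product-depth-$O(\Delta)$ circuit of size $\poly(n,s)\cdot d^{O(d^{1/\Delta})}$ computing $\Pi_w(f|_\sigma)$. The LST rank-based lower bound for set-multilinear constant-depth circuits computing polynomials of relative rank $\Omega(1)$ then forces $s \geq n^{(\log n)^{\exp(-O(\Delta))}}$, completing the proof.

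The main obstacle is the rank lower bound in the second paragraph. The FSTW evaluation dimension method~\cite{FSTW21} lower bounds the rank of the \emph{full} coefficient matrix of a polynomial, typically via multilinearization of its partial Boolean evaluations; here I need the rank of the \emph{submatrix} $M_w$ picked out by the set-multilinear monomials over $\overline{X}(w)$, sitting inside $f|_\sigma$ which is itself not set-multilinear. The FSTW multilinearization shortcut is unsafe here because multilinearization can collapse non-set-multilinear monomials onto set-multilinear ones and spuriously inflate $\rank(M_w)$; the fact that $f|_\sigma$ is already multilinear avoids this pitfall, but one still must carefully isolate the set-multilinear contribution from a polynomial whose expansion contains many non-set-multilinear terms, and track which partial-evaluation vectors genuinely live in the set-multilinear subspace.
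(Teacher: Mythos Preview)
Your overall architecture (substitute to a word-indexed knapsack, prove a relative-rank lower bound for $\Pi_w$ of the multilinear inverse, then run the LST set-multilinearization and rank upper bound) matches the paper. The genuine gap is in the second paragraph: your choice of $\ks_w$ and the accompanying rank argument do not work.

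You take $\ks_w$ to be a \emph{linear} form $\sum_{i\in[d]}\sum_{x\in X(w_i)}\alpha_{i,x}\,x-\beta$ in all the variables of $\overline X(w)$. With this choice, the partial-evaluation method cannot isolate set-multilinear positive monomials. Fix a set-multilinear negative monomial $m$ and apply the 0/1 assignment $\tau_m$ to the negative block; then $\tau_m(\ks_w)$ is a linear subset sum in \emph{all} positive variables (only the constant term changes), so by the FSTW degree bound the leading monomial of the multilinear $\tau_m(f')$ is the \emph{full} product of all positive variables, independent of $m$. There is no mechanism tying $m$ to a specific set-multilinear monomial on $w|_{P_w}$, so the leading-monomial argument says nothing about the set-multilinear submatrix $M_w$. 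Your fallback, a ``Cauchy-determinant-style'' argument on the $S=[d]$ term of the M\"obius sum, would need the row sums $u_a=\sum_{i\in P_w}\alpha_{i,a_i}$ to be pairwise distinct; but there are $|M^P_w|=2^{|w_{P_w}|}=n^{\Theta(\log n)}$ such tuples, while with $\alpha_{i,x}\le n^3$ and at most $n$ summands the values $u_a$ lie in $\{0,\dots,n^4\}$. So only $\poly(n)$ of them can be distinct, and the Cauchy route cannot give rank $\Omega(|M^P_w|)$.

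The paper avoids this by making $\ks_w$ \emph{not} linear but of degree up to~$4$, with an explicit bipartite structure: for each positive variable $x^{(i)}_\sigma$ one multiplies by a product $f^{(i)}_\sigma$ of sums of negative variables, where the indexing is arranged via overlapping intervals $A_w^{(i)},B_w^{(j)}$ so that a 0/1 assignment $\tau_m$ to the negative block (one variable per $X(w_j)$ set to~$1$) collapses $\ks_w$ to a unit-coefficient subset sum in \emph{exactly one} $x$-variable per positive block, namely $x^{(i)}_{\sigma_i}$ with $\sigma_i=\sigma(m)|_{A_w^{(i)}}$. The FSTW tight degree bound then forces the leading monomial of $g_m(x)$ to be $\prod_{i\in P_w}x^{(i)}_{\sigma_i}=m(\sigma(m)|_{A_w^{P_w}})$, and as $m$ ranges over all negative set-multilinear monomials these leading monomials hit every positive set-multilinear monomial. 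An induction on $|T|$ over submonomials (using balancedness of $w$) rules out cancellations, and full rank of $M_w(f')$ follows. In short, the missing idea is this overlap-indexed product structure in $\ks_w$; a linear $\ks_w$ cannot carry the bijection you need.
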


\begin{proof}[Proof of \autoref{thm:main} from \autoref{theorem:hardz-lower-bound}]
Let $C(\vx,\vy,\vw)$ be a multilinear \lbIPS\  refutation of $\hardz$ (where here for notational clarity we renamed the Boolean axioms placeholder variables from $\vz$ to $\vw$). Since there is only one non-Boolean axiom, $C$ has in fact only a single $\vy$ variable denoted $y$ (i.e., $\vy=\{y\}$). Note that  $\widehat C(\vx,y,\vnz)=g(\vx)\cd y$, for some polynomial $g(\vx)\in\F[\vx]$, because by assumption $\widehat C$ is linear in the $\vy$ variables (and since $\widehat C(\vx,0,\vnz)=0$). Therefore, $C(\vx,1,\vnz)$ computes the polynomial $g(\vx)$. Thus, the minimal product-depth-$\Delta$ circuit-size of $g(\vx)$ lower bounds the minimal product-depth-$\Delta$ circuit-size of $C(\vx,y,\vw)$. It remains to lower bound the size of product depth at most $\Delta$ circuits computing $g(\vx)$.  

Notice first that 
$$
\widehat C(\vx,y,\vw)= \widehat  C(\vx,y,\vnz) +\sum_{i} h_i\cd w_i,
$$
for some polynomials $h_i$ in $\vx,y,\vw$. Thus, $\widehat  C(\vx,y,\vw) =  g(\vx)\cd y + \sum_{i} h_i\cd w_i$. 

By definition of an IPS refutation 
$$\widehat C\left(\vx,\hardzp,\vx^2-\vx\right) = 1 
$$
and so 
$$g(\vx)\cd\left(\hardzp\right) + \sum_{i} (h_i\cd(x_i^2-x_i)) = 1.$$
From this we get that over the Boolean cube
$ g(\vx)\cd\hardzp \equiv 1$ (as a function, not necessarily as a polynomial identity), and hence $ g(\vx)  = \frac{1}{\hardzp}$ over the Boolean cube. This shows that the size of $C(\vx,y,\vw)$ must be at least $ n^{(\log n)^{\exp(-O(\Delta))}}$ by \autoref{theorem:hardz-lower-bound}.
\end{proof}

\bigskip 
The rest of \autoref{sec:lower bound} is devoted to proving  \autoref{theorem:hardz-lower-bound}.
\bigskip

We shall use the tight degree lower bound proved  in \cite{FSTW21} for functions defined by $\hat{f}(\vx)=\nicefrac{1}{f(\vx)}$ for simple polynomials $f(\vx)$. Specifically, we use the fact that any multilinear polynomial agreeing with $\nicefrac{1}{f(\vx)}$, where $f(\vx)$ is the subset-sum axiom $\sum_{i=1}^n x_i - \beta$ (where $\beta$ is such that the axiom has no Boolean roots) must have  degree $n$.  We note that a degree lower bound of $\ceil{\nicefrac{n}{2}}+1$ was established by Impagliazzo, Pudl\'ak, and Sgall~\cite{IPS99}.
%They actually established this degree bound\,
%
%%\footnote{The degree lower bound of Impagliazzo, Pudl\'ak, and Sgall~\cite{IPS99}  actually holds for the (dynamic) polynomial calculus proof system, while we only consider the (static) Nullstellensatz proof system here.  Note that for polynomial calculus Impagliazzo, Pudl\'ak, and Sgall~\cite{IPS99} also showed a matching upper bound of $\ceil{\nicefrac{n}{2}}+1$ for $\vaa=\vno$.}
%
%when $f(\vx)=\sum_i \alpha_i x_i-\beta$ for \emph{any} %$\vaa$, while we only consider $\vaa=\vno$ here. %
%
However, similar to \cite{FSTW21}, we need the tight bound of $n$ here as it will be used crucially in the proof of \autoref{lemma:rank-lower-bound} to obtain the rank lower bound, which is a stronger notion than degree lower bound. Recall the notation  $\vx^2-\vx$ for the Boolean axioms of the $\vx$-variables.

%\mar{FIX: inconsistency between $x$ as a sequence of variables and $\vx$}

\begin{lemma}[Proposition 5.3 in \cite{FSTW21}]\label{lem:FSTW21-deg-lb}
Let $n\ge 1$ and $\F$ be a field with $\chara(\F)>n$ (or $\chara(\F)=0$). Suppose that $\beta\in \F\setminus\{0,\ldots,n\}$. Let $f\in\F[x_1,\ldots,x_n]$ be a multilinear polynomial such that
        \[
                f(\vx)\left(\sum_i x_i-\beta\right)=1 \mod \vx^2-\vx
                \;.
        \]
        Then $\deg f=n$.
\end{lemma}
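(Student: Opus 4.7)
The plan is to identify $f$ with the unique multilinear interpolation of its values on the Boolean cube and then to show that the coefficient of the top monomial $x_1 x_2 \cdots x_n$ is nonzero. Since $\beta \notin \{0,\ldots,n\}$, the linear form $\sum_i x_i - \beta$ does not vanish anywhere on $\{0,1\}^n$, so the congruence $f(\vx)\bigl(\sum_i x_i - \beta\bigr) \equiv 1 \pmod{\vx^2 - \vx}$ pins down $f(\mathbf{1}_S) = 1/(|S| - \beta)$ for every $S \subseteq [n]$, where $\mathbf{1}_S$ is the indicator vector of $S$. Because a multilinear polynomial in $n$ variables is completely determined by its values on $\{0,1\}^n$, it suffices to show that the coefficient $c_{[n]}$ of $x_1 \cdots x_n$ in this interpolation is nonzero.

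Next, I would apply M\"obius inversion on the Boolean lattice to extract the top coefficient, yielding
$$c_{[n]} \;=\; \sum_{S \subseteq [n]} (-1)^{n-|S|} f(\mathbf{1}_S) \;=\; \sum_{k=0}^{n} (-1)^{n-k} \binom{n}{k} \frac{1}{k - \beta}.$$
The key computation is then the classical partial fraction identity
$$\sum_{k=0}^{n} \frac{(-1)^{n-k} \binom{n}{k}}{z - k} \;=\; \frac{n!}{\prod_{k=0}^{n}(z-k)},$$
which one obtains by expanding $1/\prod_{k=0}^{n}(z-k)$ in partial fractions and computing the residue at $z=k$ as $(-1)^{n-k}\binom{n}{k}/n!$. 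Specializing to $z = \beta$ and using $1/(k-\beta) = -1/(\beta-k)$ gives
$$c_{[n]} \;=\; -\,\frac{n!}{\prod_{k=0}^{n}(\beta-k)}.$$

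Finally, each factor in the denominator is nonzero because $\beta \notin \{0,\ldots,n\}$, and $n!$ is invertible in $\FF$ because $\chara(\FF)=0$ or $\chara(\FF) > n$; hence $c_{[n]} \neq 0$, so $\deg f \geq n$, while multilinearity provides the matching upper bound $\deg f \leq n$. The only mildly technical step is the partial fraction computation and its sign bookkeeping; everything else is a routine consequence of the uniqueness of multilinear interpolation on $\{0,1\}^n$ together with the hypothesis that $\beta$ avoids the Hamming-weight spectrum $\{0,\ldots,n\}$. It is precisely this exact formula for $c_{[n]}$ (rather than a mere degree lower bound of $\lceil n/2\rceil+1$) that will be needed downstream to push a degree statement up to a rank statement in \autoref{lemma:rank-lower-bound}.
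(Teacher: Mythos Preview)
The paper does not prove this lemma at all; it is quoted verbatim from \cite{FSTW21} and used as a black box. Your argument is correct and is essentially the standard proof (and, as far as I am aware, matches the one in \cite{FSTW21}): the M\"obius-inversion formula for the top coefficient together with the partial-fraction identity $\sum_{k=0}^{n}\frac{(-1)^{n-k}\binom{n}{k}}{z-k}=\frac{n!}{\prod_{k=0}^{n}(z-k)}$ yields $c_{[n]}=-\,n!/\prod_{k=0}^{n}(\beta-k)\neq 0$ under the stated hypotheses on $\beta$ and $\chara(\FF)$.
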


\subsection{Subset-Sum Based on a Word $w$}\label{section:subset-sum-over-word}
In this section we define an auxiliary polynomial we use to prove \autoref{theorem:hardz-lower-bound}. It is a variant of the subset-sum that is defined from a given word $w$. Let $w\in \ZZ^d$  be arbitrary word, and consider the sequence $\overline{X}(w)=\langle X(w_1),\dots,X(w_d)\rangle $ of sets of variables. We fix now a useful representation of the variables in $\overline{X}(w)$.

For any $i\in P_w$, we write the variables of $X(w_i)$ in the form $x^{(i)}_\sigma$, where $\sigma$ is a binary string indexed by the set (formally, a binary string  \emph{indexed} by a set $A$ is a function from $A$ to $\{0,1\}$):   
\[
A_w^{(i)} := \left[\sum_{\substack{i'\in P_w\\i' < i }}w_{i'} + 1,\sum_{\substack{i'\in P_w\\i'\leq i}}w_{i'}\right].
\] 
Hence, the size of $A_w^{(i)}$ is precisely $w_i$ and binary strings on the interval $A_w^{(i)}$ (i.e., $\{0,1\}^{A_w^{(i)}}$), allows $2^{|A_w^{(i)}|}=2^{w_i}$ possible strings, each corresponds to a different variable in $X(w_i)$.

Similarly, for any $j\in N_w$, we  write the variables of $X(w_i)$ in the form $y^{(j)}_\sigma$, where $\sigma$ is a binary string indexed by the set 
\[
B_w^{(j)} := \left[\sum_{\substack{j'\in N_w\\ j' < j}}|w_{j'}| + 1,\sum_{\substack{j'\in N_w\\ j' \leq j}}|w_{j'}|\right].
\]
We call the variables in $x^{(i)}_\sigma$  the \emph{positive variables}, or simply $\vx$-variables, and the variables $y^{(j)}_\sigma$  the \emph{negative variables}, or simply $\vy$-variables. We write $A_w^S$  for the set $\bigcup_{i\in S}A_w^{(i)}$  for any $S\subseteq P_w$, and $B_w^T$  for the set $\bigcup_{j\in T}B_w^{(j)}$ for any $T\subseteq N_w$.

Each monomial that is set-multilinear on $w|_{S}$ for some $S\subseteq P_w$ corresponds to a binary string indexed by the set $A_w^S$, and any monomial that is set-multilinear on $w|_{T}$   for some $T\subseteq N_w$ corresponds to a binary string indexed by the set $B_w^T$. For any set-multilinear monomial $m$ on some $w|_{S}$ with $S\subseteq P_w$ we denote by $\sigma(m)$ the corresponding binary string indexed by $A_w^S$, and for any binary string $\sigma$ indexed by $A_w^S$   we denote by $m(\sigma)$ the monomial it defines, and similarly for strings and monomials on the negative variables.
Thus notice that for a (negative or positive) monomial $m$ we have  $m(\sigma(m))=m$. Moreover, if $m$ is a negative monomial and $S\subseteq P_w$, we write $m(\sigma(m)|_{A^S_w})$ to denote the \emph{positive} monomial determined by the string $\sigma(m)|_{A^S_w}$ which is a substring of $\sigma(m)$ restricted to $A^S_w$.

Therefore, every set-multilinear monomial on $w$ is of degree $d$ with each $\vx$-variable picked uniquely from the $X(w_i)$-variables, for  $i\in P_w$ the positive indices in $w$, and each $\vy$-variable is picked uniquely from the $X(w_j)$-variables,  for $j\in N_w$ the negative indices in $w$, and moreover the  monomial corresponds to a binary string of length $\sum_{i=1}^d |w_i|$.     

We call the word $w$   \textbf{balanced} if for every $i\in P_w$  there is some $j\in N_w$ such that $A_w^{(i)}\cap B_w^{(j)}\neq \emptyset$ and for every  $j\in N_w$ there is some $i\in P_w$ such that  $A_w^{(i)}\cap B_w^{(j)}\neq \emptyset$. This means that any positive variable as defined above has some overlap with some negative variable within the given indexing scheme, and vice versa. \autoref{fig:unbalanced} and \autoref{fig:balanced} give examples of unbalanced and balanced words, respectively.

Our notion of a balanced word is different from, but related to, the notion of a  $b$-unbiased word in \cite{LST21}: a word $w\in\ZZ^d$ is $b$-\emph{unbiased} if $|w_{[t]}|\leq b$ for every $t\leq d$. If a balanced word $w\in\ZZ^d$ has all its entries bounded by $b$ in absolute value, i.e., $|w_i|\leq b$ for every $i\in [d]$, then the sum of all entries of $w$ is bounded by $b$ in the sense that $|w_{[d]}|\leq b$. The notion of balance is more relaxed than that of unbiased in the sense that we do not need the property that all initial segments are also bounded by $b$. On the other hand the construction and proof in \cite{LST21} does not require a balanced word, when this property is essential for us in the proof of \autoref{lemma:rank-lower-bound}.

\begin{figure}\center 
\includegraphics[scale=0.8]{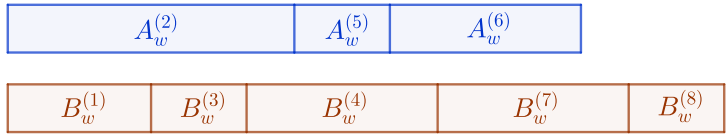}
\caption{\small An example of an \emph{un}balanced word $w\in\ZZ^d$, and where each $w_i$ is written as a box with its corresponding binary string index set inside it, such that  $|A_w^{(i)}|=w_i$ for $i\in P_w$ and $|B_w^{(j)}|=w_j$ for $j\in N_w$. A word is balanced if for every positive box  $i \in P_w$, there exists some overlapping negative box, i.e., a   $j \in N_w$ such that $A_w^{(i)} \cap B_w^{(j)} \neq \emptyset$, and vice versa. Here however, $B_w^{(8)} \cap (A_w^{(2)} \cup A_w^{(5)} \cup A_w^{(6)}) = \emptyset$.}
        \label{fig:unbalanced}
\end{figure}

\begin{figure}\center   
        \includegraphics[scale=0.8]{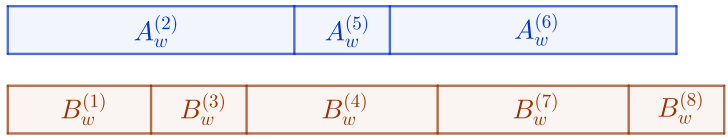}
        \caption{\small An example of a balanced word. The construction of the knapsack polynomial $\textbf{ks}_w$ makes use of a balanced word $w$.}
        \label{fig:balanced}
\end{figure}

We define the polynomial $\ks_w$ as follows. Below we suppose that $|w_{N_w}|\geq |w_{P_w}|$, so that the negative monomials are determined by a longer binary string than the positive ones. Otherwise we flip the roles of the negative and positive variables in the definition below. 

For a positive index $i\in P_w$ and $\sigma\in\{0,1\}^{A_w^{(i)}}$, define the polynomial 
\begin{equation}\label{lift}
 f^{(i)}_\sigma :=\prod_{\substack{j\in N_w:\\ A_w^{(i)}\cap B_w^{(j)} \neq \emptyset}}\sum_{\sigma_j\in \{0,1\}^{B_w^{(j)}}}y^{(j)}_{\sigma_j},
\end{equation}
where the sum ranges over all those $\sigma_j$  that agree with $\sigma$  on $A_w^{(i)}\cap B_w^{(j)}$. The degree of $f^{(i)}_\sigma$ equals the number of those $B_w^{(j)}$ that overlap with $A_w^{(i)}$. Note that the degree is always at least $1$. Now define the polynomial 
\[
 \ks_w:=\sum_{i\in P_w}\sum_{\sigma\in\{0,1\}^{A_w^{(i)}}} x_\sigma^{(i)}f_\sigma^{(i)} -\beta,
\]
where $\beta$ is any field \F\ element  so that the polynomial $\ks_w$ has no Boolean roots (here we use the fact that $\chara(\F)=0$). \autoref{fig:matching} gives an example of the construction.

\begin{figure} \center 
        \includegraphics[scale=0.8]{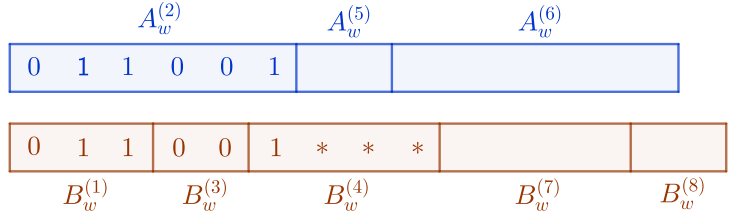}
        \caption{\small Here $\ast$ represents either $0$ or $1$. In the construction of the polynomial $\ks_w$, for $i=2$ and $\sigma = 011001$, we have $f_{011001}^{(2)} = y_{011}^{(1)} \cdot y_{00}^{(3)} \cdot (y_{1000}^{(4)} + y_{1001}^{(4)} + \cdots + y_{1111}^{(4)})$.}
        \label{fig:matching}
\end{figure}

The basic idea behind the above construction is simple. Given a monomial $m$ that is set-multilinear over $w|_{N_w}$, consider the partial assignment $\tau_m$ to the negative variables that sends any variable in $m$ to $1$  and all other negative variables to $0$. Now after applying $\tau_m$  to $\ks_w$ we are left with a simple subset sum instance
\[
\sum_{i\in P_w}x_{\sigma_i}^{(i)} - \beta,
\] 
where $\sigma_i$ is the binary string indexed by $A_w^{(i)}$  that agrees with $\sigma(m)$  on $A_w^{(i)}$. Similarly for any monomial that is set-multilinear over $w|_{T}$ for some $T\subseteq N_w$, the polynomial $\tau_m(\ks_w)$ is the subset sum instance
\[
\sum_{i\in S}x_{\sigma_i}^{(i)} - \beta,
\] 
where $S$  is the maximal subset of $P_w$  such that $A_w^S\subseteq B_w^T$. With \autoref{lem:FSTW21-deg-lb} we have a very good understanding of the multilinear IPS refutations of such subset sum instances, namely we know that the multilinear polynomial $f$  that equals
\[
\frac{1}{\sum_{i\in S}x_{\sigma_i}^{(i)} - \beta}\text{  over Boolean assignments}
\]
has as its leading monomials the product $\prod_{i\in S}x_{\sigma_i}^{(i)}$. This observation allows us to prove our rank lower bound in the following section. \autoref{fig:leading-monomial} in \autoref{sec:rank-lower-bound-lemma} illustrates the way assignments to the negative variables give rise to simple subset sum instances.

%where $S$ is the maximal subset of $P_w$ such that $A_w^S\subseteq B_w^T$. With \autoref{lem:FSTW21-deg-lb} we have a very good understanding of the refutations of such subset sum instances, namely we know exactly the leading monomials of the multilinear refutations. This observation allows us to prove our rank lower bound in the following section. \autoref{fig:matching} illustrates the way we can capture the leading monomials by assignments to the negative variables.
%\begin{figure}
%        \includegraphics{img/tex/matching.pdf}
%        \caption{Matching between $A_w^S$ and $B_w^T$}
%
%        \label{fig:matching}
%\end{figure}

% there are at most two indices $j\in I_N$ such that $A_i\cap B_j \neq \emptyset$. Let now $i\in I_P$ and let $\sigma\in\{0,1\}^{A_i}$. If there is only one $j$ such that $A_i\cap B_j \neq \emptyset$, let $f^{(i)}_\sigma$ be the sum $\sum_{\sigma'} y_{\sigma'}^{(j)}$, where $\sigma'$ ranges over the Boolean strings indexed by $B_j$   that agree with $\sigma$   on the index-set $A_i\cap B_j$. If there are two distinct $j$   and $j'$ with $A_i\cap B_j \neq \emptyset$ and $A_i\cap B_{j'} \neq \emptyset$, let
%$$
%f^{(i)}_\sigma:=\sum_{\sigma_j,\sigma_{j'}} y_{\sigma_j}^{(j)}y_{\sigma_{j'}}^{(j')}
%$$ where $\sigma_j$   ranges over the Boolean strings   indexed by $B_j$ that agree with $\sigma$   on the index-set $A_i\cap B_j$ and where $\sigma_{j'}$   ranges over the Boolean strings indexed by $B_{j'}$   that agree with $\sigma$   on the index set $A_i\cap B_{j'}$.

\subsection{Rank Lower Bound Lemma}\label{sec:rank-lower-bound-lemma}

In this section we prove the main technical lemma of this paper -- a rank lower bound for subset sum over any balanced word. Let $w\in\ZZ^d$  be any word, and let $f$ be a multilinear polynomial in the variables $\overline{X}(w)$. Denote by $M(f)$ the coefficient matrix of $f$  with rows indexed by all multilinear monomials (of \emph{any} degree) in the positive variables and columns indexed by all multilinear monomials (again, of any degree) in the negative variables, and denote by $M_w(f)$   its \emph{submatrix} with rows indexed by all monomials that are set-multilinear over $w|_{P_w}$  and columns indexed by all monomials that are set-multilinear over $w|_{N_w}$.

\begin{figure}\center 
        \includegraphics[scale=0.7]{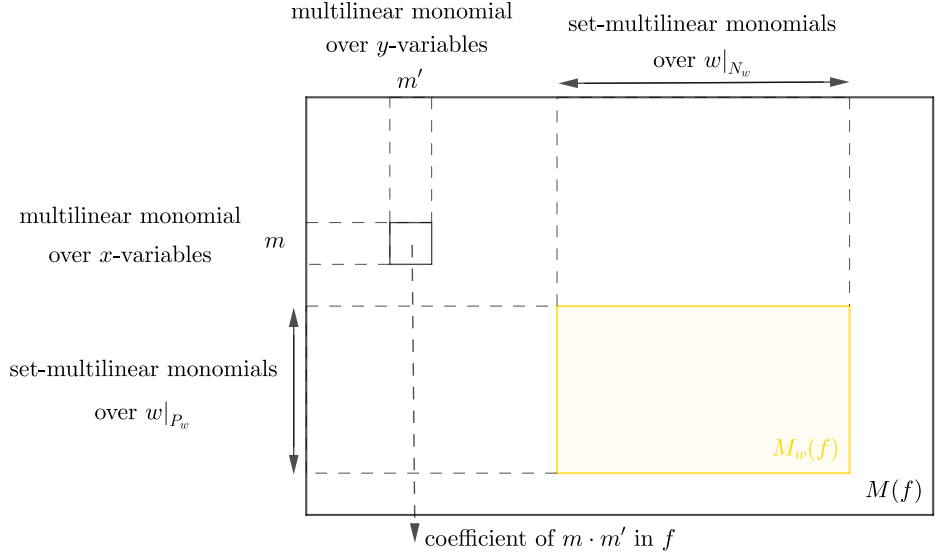}
        \caption{\small The coefficient matrix $M(f)$ with rows and columns indexed by the multilinear monomials over the $x$-variables and $y$-variables respectively; and its submatrix $M_w(f)$ with rows and columns indexed by the set-multilinear monomials over $w \vert_{P_w}$ and $w \vert_{N_w}$ respectively. If $w$ is a balanced word and $f$ is a multilinear polynomial agreeing with $\nicefrac{1}{\ks_w}$ over boolean assignments, then $M_w(f)$ has full-rank}
        \label{fig:submatrix}
\end{figure}

\begin{lemma}\label{lemma:rank-lower-bound}
                Let $w\in\ZZ^d$  be a balanced word, and let $f$ be the multi-linear polynomial  so that
                \[
                f = \frac{1}{\ks_w} \text{  over Boolean assignments}.
                \]
                Then $M_w(f)$ is full-rank.
        \end{lemma}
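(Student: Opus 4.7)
My plan is to reduce the rank lower bound to the degree lower bound of \autoref{lem:FSTW21-deg-lb} by extending the evaluation-dimension technique of \cite{FSTW21} to the set-multilinear submatrix of $M(f)$. For each $m'\in M_w^N$ I would consider the Boolean assignment $\tau_{m'}$ that sends every variable in $\mathrm{vars}(m')$ to $1$ and every other $\vy$-variable to $0$. Applying $\tau_{m'}$ to the identity $f\cdot\ks_w\equiv 1$ and evaluating $\tau_{m'}(\ks_w)$ term by term from its defining formula gives
\[
\tau_{m'}(f)\cdot\Bigl(\,\sum_{i\in P_w}x^{(i)}_{\sigma(m')|_{A_w^{(i)}}} - \beta\Bigr) \equiv 1 \pmod{\vx^2-\vx}.
\]
The values $\{-\beta,\dots,-\beta+|P_w|\}$ are already realised as Boolean values of $\ks_w$ (take $\vy=\tau_{m'}$), so the absence of Boolean roots of $\ks_w$ forces $\beta\notin\{0,\dots,|P_w|\}$, and \autoref{lem:FSTW21-deg-lb} applies. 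This gives that $\tau_{m'}(f)$ depends only on the $|P_w|$ variables appearing in $\tau_{m'}(\ks_w)$ and that its full-degree monomial $m^*(m'):=\prod_{i\in P_w}x^{(i)}_{\sigma(m')|_{A_w^{(i)}}}$ has non-zero coefficient.

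Next I would express entries of $M_w(f)$ in terms of the restrictions $\tau_T(f)$ via M\"obius inversion: for every multilinear $\vy$-sub-monomial $T$ of $m'$, $\mathrm{coeff}_{\vx^m}(\tau_T(f))=\sum_{T''\subseteq T}\mathrm{coeff}_f(\vx^m\cdot T'')$, and inverting yields
\[
(M_w(f))_{m,m'} \;=\; \sum_{T\subseteq m'}(-1)^{|m'|-|T|}\,\mathrm{coeff}_{\vx^m}(\tau_T(f)).
\]
The key claim is that when $m\in M_w^P$ every term with $T\subsetneq m'$ vanishes. Such a $T$ picks a proper subset $T'\subsetneq N_w$; by balance of $w$ some $j^*\in N_w\setminus T'$ overlaps some $i^*\in P_w$, and a repeat of the computation above shows that $\tau_T(\ks_w)$ contains no variable from $X(w_{i^*})$. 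By the same multilinear-uniqueness argument, $\tau_T(f)$ also contains no variable from $X(w_{i^*})$, but any $m\in M_w^P$ uses one variable from each $X(w_i)$ with $i\in P_w$, including $X(w_{i^*})$; hence $\mathrm{coeff}_{\vx^m}(\tau_T(f))=0$.

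Combining the two steps, $(M_w(f))_{m,m'}=\mathrm{coeff}_{\vx^m}(\tau_{m'}(f))$, which is non-zero only when $m=m^*(m')$ and is non-zero then. Each column of $M_w(f)$ thus has a unique non-zero entry, in row $m^*(m')$. Since $A_w^{P_w}\subseteq B_w^{N_w}$ the map $m'\mapsto m^*(m')$ is surjective onto $M_w^P$, so every row of $M_w(f)$ contains a non-zero entry, giving full row rank $|M_w^P|$. The main obstacle I anticipate is precisely the collapse of the M\"obius sum, and this is where the balance assumption on $w$ is essential: without it, dropping a factor of $m'$ need not strip away any positive index entirely, and partial restrictions could contaminate the coefficients of set-multilinear positive monomials.
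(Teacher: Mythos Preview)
Your proof is correct, and it tracks the paper's argument in its essential ingredients---the partial Boolean assignments $\tau_{m'}$, the degree lemma (\autoref{lem:FSTW21-deg-lb}), and the balance condition---but it organises them differently and in fact proves a slightly stronger structural fact.

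The paper writes $f=\sum_m g_m(\vx)\,m$ over all multilinear $\vy$-monomials and proves by induction on $|T|$ that the \emph{leading} monomial of $g_m(\vx)$, for $m$ set-multilinear on $w|_T$, is bounded above by $m(\sigma(m)|_{A_w^S})$; the balance hypothesis is invoked only once, at the end, to upgrade the inequality to an equality when $T=N_w$. Full rank then follows because distinct columns of $M_w(f)$ have distinct leading nonzero rows that exhaust $M_w^P$. Your route replaces the induction by a direct M\"obius inversion: you identify $(M_w(f))_{m,m'}$ with $\mathrm{coeff}_{\vx^m}(\tau_{m'}(f))$ by showing that every proper term $T\subsetneq m'$ in the inclusion--exclusion sum vanishes, and this is exactly where you spend the balance assumption (every dropped negative index kills an entire positive block in $\tau_T(f)$). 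The payoff is a sharper conclusion than the paper states: each column of $M_w(f)$ has \emph{exactly one} nonzero entry, at row $m^*(m')$, so the columns are (scalar multiples of) standard basis vectors and surjectivity of $m'\mapsto m^*(m')$ gives full row rank immediately. The paper's leading-monomial argument only pins down the top nonzero entry in each column, which still suffices but is weaker. Both proofs are equally short; yours is arguably more transparent about the role of balance, while the paper's induction makes explicit what happens for partial $T\subsetneq N_w$ even when balance fails.
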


        \begin{proof}
                        Without loss of generality we assume that $|w_{N_w}| \geq |w_{P_w}|$, so that our notation matches that of Section \ref{section:subset-sum-over-word}.
                Write
\begin{equation}\label{eq:fgm}
                f = \sum_{m}g_m(x) m,        
\end{equation}              
where the sum ranges over all multilinear monomials $m$ in the $y$-variables and $g_m(x)$ is some multilinear polynomial in the $x$-variables (note that here we include all multilinear monomials and not only the set-multilinear ones). We show that for any $m$ that is set-multilinear on $w|_{N_w}$ the leading monomial of $g_m(x)$ is the set-multilinear monomial $m(\sigma(m)|_{A_w^{P_w}})$. This is where we focus on the submatrix of set-multilinear monomials within the bigger matrix of all multilinear monomials, as depicted in \autoref{fig:submatrix}. To prove this we need the following claim.

\begin{claim}\label{cla:main-rank-claim}
                        For any monomial $m$  set-multilinear on some $w|_{T}$, where $T\subseteq N_w$,
                        the leading monomial of $g_m(x)$   is less or equal to
                        \[m\left(\sigma(m)|_{A_w^S}\right),\] where $S$ is the maximal subset of $P_w$ such that $A_w^S \subseteq B_w^T$.

                        Moreover if $m$ is set-multilinear on $w|_{N_w}$, then the leading monomial of $g_m(x)$ \emph{ equals}
                        \[m\left(\sigma(m)|_{A_w^{P_w}}\right).\]
                \end{claim}
        \begin{proof}[Proof of Claim.]
                Proof by induction on the size of $T$. 

\Base If $T = \emptyset$, the only monomial set-multilinear on $w|_\emptyset$ is the empty monomial $1$. Now consider the partial assignment $\tau_1$ that maps all the $\vy$-variables to $0$. Now
                $\tau_1(f) = g_1(x)$, where $g_1(x) $   is the coefficient of the empty monomial $1$. On the other hand, since
                \[
                f = \frac{1}{\ks_w}\text{\quad over Boolean assignments,}
                \]
                we have that $\tau_1(f) = 1/-\beta$ over Boolean assignments. As $g_1(x)$ is multilinear, $g_1(x) = 1/-\beta$ as a polynomial identity and so the leading monomial of $g_1(x)$ is the empty monomial $1$.

\induction                 
                Suppose then that $T$ is non-empty, and let $m$ be a set-multilinear monomial on $w|_{T}$. Now consider the partial assignment $\tau_m$ that maps any variable in $m$   to $1$   and any other $\vy$-variable to $0$. By \autoref{eq:fgm}
\begin{equation}\label{eq:mfgprime}
                \tau_m(f) = \sum_{m'}g_{m'}(x),
\end{equation}               
                where $m'$ ranges over all submonomials of $m$. On the other hand, by the construction of $\ks_w$
                \[
                \tau_m(f) = \frac{1}{\sum x^{(i)}_\sigma -\beta}\text{\quad over Boolean assignments,}
                \]
                where the sum in the denominator ranges over those $i$ and $\sigma$ such that $A_w^{(i)}\subseteq B_w^T$ and $\sigma$  agrees with $\sigma(m)$  on the interval $A_w^{(i)}$. Note that for any $i\in P_w$ there is at most one $\sigma$ such that $x^{(i)}_\sigma$  appears in the sum (because $A_w^{(i)}\subseteq B_w^T$, and by construction of $\ks_w$, $x_\sigma^{(i)}$ for a fixed $\sigma$ is multiplied in $\ks_w$ by all products $\Pi_j y^{(j)}_{\rho_j}$ such that the concatenation of the $\rho_j$'s extends the string $\sigma$; but since we assigned 0-1 to all the  $\vy$-variables there is  a \emph{single }such concatenation, induced by our assignment of 1's; see \autoref{fig:matching}).

\begin{figure}
               \center
        \includegraphics[scale=0.8]{img/tex/leading-monomial.pdf}
         \caption{\small In this example, $T = \{1,4,7,8\} \subseteq N_w$ and  $m = y_{100}^{(1)} \cdot y_{1001}^{(4)} \cdot y_{0110}^{(7)} \cdot y_{11}^{(8)}$ is a set-multilinear monomial over $w \vert_T$. As $S = \{5,6\}$ is the maximal subset of $P_w$ such that $A_w^S \subseteq B_w^T$, we therefore have that the leading monomial of $g_m(x)$ (from \autoref{eq:fgm}) is less than or equal to $x_{00}^{(5)} \cdot x_{101101}^{(6)}$. Moreover, in the polynomial $\textbf{ks}_w$, the partial assignment setting the $y$-variables in $m$ to $1$ and the remaining $y$-variables to $0$ results in the polynomial $x_{00}^{(5)} + x_{101101}^{(6)} - \beta$.}
        \label{fig:leading-monomial}
\end{figure}

                It follows by \autoref{lem:FSTW21-deg-lb} that the leading monomial of $\tau_m(f) $  is the product of all the $x^{(i)}_\sigma$ appearing in the sum in the denominator above, and thus the leading monomial equals
\begin{equation}\label{eq:msigm}
                m\left(\sigma(m)|_{A_w^S}\right),
\end{equation}
                where $S$ is the maximal subset of $P_w$  such that $A_w^S\subseteq B_w^T$.
 By \autoref{eq:mfgprime}, this means that either the leading monomial of $g_m(x)$ is less than or equal to  \eqref{eq:msigm}, or otherwise is greater than  \eqref{eq:msigm} but is cancelled out in \eqref{eq:mfgprime} by  some monomial in $g_{m'}(x)$ for $m'$ a proper submonomial of $m$. 
But by induction assumption, for a \emph{proper} submonomial $m'$ of $m$ with $T'\subsetneq T$ and $m'$ set-multilinear on $w|_{T'}$, the  leading monomial of $g_{m'}(x)$ is less or equal to 
$m\left(\sigma(m')|_{A_w^{S'}}\right)$, where $S'$ is the maximal subset of $P_w$ such that $A_w^{S'} \subseteq B_w^{T'}$. Since, $m'$ is a submonomial of $m$ the monomial $m\left(\sigma(m')|_{A_w^{S'}}\right)$ is less than or equal to 
$m\left(\sigma(m)|_{A_w^{S}}\right)$, and so the above mentioned cancellation cannot occur, and we conclude that the leading monomial of $g_m(x)$ is less than or equal to \eqref{eq:msigm}.

%a possibly smaller (in the monomial ordering)  $m(\sigma(m)|_{A_w^S}).$ \iddo{Perhaps inaccurate here: because the leading monomial of $g_{m'}(x)$ is less or equal to a *possibly smaller monomial than $m(\sigma(m)|_{A_w^S})$, and not $m(\sigma(m)|_{A_w^S})$ itself, and this should be stated, or explained while changing the induction statement (i.e., the statement of the claim).}

%\iddo{original text:                As this is the leading monomial of $\tau_m(f)$, it follows that any monomial on $g_m(x)$ is also less or equal to $m(\sigma(m)|_{A_w^S})$.} 
\bigskip 

It remains to show that the leading monomial of $g_m(x)$ is \emph{precisely} \eqref{eq:msigm}, when $m$ is a set-multilinear monomial on $w|_{N_w}$. Let $m'$ be a proper submonomial of $m$  that is set-multilinear over $w|_T$ for some $T\subsetneq N_w$. By the assumption that $w$  is balanced there is some $i\in P_w$  such that $A_w^{(i)} \nsubseteq B_w^T$, and thus the leading monomial of $g_{m'}(x)$  is properly smaller than $
                m(\sigma(m)|_{A_w^{P_w}})
                $. Hence, by \autoref{eq:msigm} (and the sentence preceding it) the leading monomial of $g_m(x)$   must equal $
                m(\sigma(m)|_{A_w^{P_w}})
                $.
        \end{proof}

        By the claim for any $m$ that is set-multilinear on $w|_{N_w}$ the leading monomial of $g_m(x)$ is the monomial $m(\sigma(m)|_{A_w^{P_w}})$, and these include all the set-multilinear monomials on $w|_{P_w}$. Thus the column space of $M_w(f)$ spans the space of all set-multilinear polynomials on $w|_{P_w}$, and $M_w(f)$ is full-rank (where a column in $M_w(f)$ determines a polynomial that is a linear combination of the positive monomials in its rows).

        \end{proof}

    \begin{corollary}\label{cor:relrk-lower-bound}
        Let $w\in\ZZ^d$  be a balanced word with $|w_i|\leq b$  for all $i\in [d]$, and let $f$   be the multi-linear polynomial  so that
        \[
        f = \frac{1}{\ks_w} \text{  over Boolean assignments}.
        \]
        Then $\relrk_w(f)\geq 2^{-b/2}$.
    \end{corollary}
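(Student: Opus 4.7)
The plan is compact: by \autoref{lemma:rank-lower-bound} the matrix $M_w(f)$ is full-rank, so $\rank(M_w(f)) = \min(|M^P_w|,|M^N_w|)$. Substituting this into the definition of $\relrk_w(f)$ reduces the claim to a size comparison between $|M^P_w|$ and $|M^N_w|$, which in turn reduces to an elementary consequence of balancedness already hinted at in the text following \autoref{fig:balanced}.

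First I would unpack the sizes. Since $|X(w_i)| = 2^{|w_i|}$ for each $i\in[d]$, and since the positive (resp.\ negative) set-multilinear monomials pick exactly one variable from each $X(w_i)$ with $i\in P_w$ (resp.\ $j\in N_w$), one has $|M^P_w| = 2^{w_{P_w}}$ and $|M^N_w| = 2^{|w_{N_w}|}$. Combining this with the full-rank conclusion of \autoref{lemma:rank-lower-bound} yields
$$
\relrk_w(f)
\;=\; \frac{\min(|M^P_w|,|M^N_w|)}{\sqrt{|M^P_w|\cdot|M^N_w|}}
\;=\; \sqrt{\tfrac{\min(|M^P_w|,|M^N_w|)}{\max(|M^P_w|,|M^N_w|)}}
\;=\; 2^{-|w_{P_w}-|w_{N_w}||/2}
\;=\; 2^{-|w_{[d]}|/2},
$$
using $w_{[d]} = w_{P_w} - |w_{N_w}|$.

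It therefore suffices to establish $|w_{[d]}|\leq b$ whenever $w$ is balanced with $|w_i|\leq b$ for all $i$. Assume without loss of generality that $w_{P_w}\geq |w_{N_w}|$ (the other case is symmetric) and let $i_0 := \max P_w$. The positive interval $A_w^{(i_0)}$ ends at $w_{P_w}$ and has length $w_{i_0}\leq b$, while every negative interval $B_w^{(j)}$ lies inside $[1,|w_{N_w}|]$. Balancedness forces $A_w^{(i_0)}$ to meet some $B_w^{(j)}$, so its left endpoint satisfies $w_{P_w}-w_{i_0}+1\leq |w_{N_w}|$, which gives $w_{P_w}-|w_{N_w}|\leq w_{i_0}-1\leq b-1 < b$. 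Plugging $|w_{[d]}|\leq b$ into the display above yields $\relrk_w(f)\geq 2^{-b/2}$. I do not expect any serious obstacle here: the rank input comes for free from \autoref{lemma:rank-lower-bound}, and the interval-overlap argument is essentially the observation already noted in the discussion of balanced vs.\ $b$-unbiased words.
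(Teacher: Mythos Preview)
Your proposal is correct and follows essentially the same route as the paper: invoke \autoref{lemma:rank-lower-bound} to get full rank, compute $|M^P_w|=2^{w_{P_w}}$ and $|M^N_w|=2^{|w_{N_w}|}$, and reduce to the bound $|w_{[d]}|\le b$. The only difference is that the paper simply asserts this last inequality (appealing to the earlier remark about balanced versus $b$-unbiased words), whereas you actually supply the short interval-overlap argument via $i_0=\max P_w$; your version is slightly more self-contained but otherwise identical in spirit.
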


        \begin{proof}
                Assume again without a loss of generality that $|w_{N_w}| \geq |w_{P_w}|$. By the ``balanced-ness'' and the assumption that $|w_i|\leq b$  for all $i\in [d]$, we know that $|w_{P_w}| - |w_{N_w}| \geq -b$. By \autoref{lemma:rank-lower-bound}, $M_w(f)$  is of rank $|M^P_w|$, and so
                \[
                \relrk_w(f) = \sqrt{\frac{|M^P_w|}{|M^N_w|}} = \sqrt{2^{|w_{P_w}| - |w_{N_w}|}} \geq 2^{-b/2}.
                \]
        \end{proof}

\subsection{Lower Bound for Constant-Depth Set-Multilinear Circuits}

In this section we prove the following lower bound on bounded-depth set-multilinear circuits.

\begin{lemma}\label{lemma:sml-lower-bound}
        Let $d,k,\Delta\in \NN_+$ with $k \geq 10d$. Let $w\in\ZZ^d$  be a balanced word in the vocabulary $\{\ak,-k\}$, where $\alpha = \nicefrac{1}{\sqrt{2}}$, and let $f$  be the multilinear polynomial which equals $\nicefrac{1}{\ks_w}$ over Boolean assignments. Then any set-multilinear circuit of product-depth $\Delta$ computing the set-multilinear projection $\Pi_w(f)$ has size at least
        \[2^{k\left(\frac{d^{\nicefrac{1}{(2^\Delta - 1)}}-20}{40\Delta}\right)} \]
\end{lemma}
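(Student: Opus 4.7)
The plan is to derive \autoref{lemma:sml-lower-bound} by combining the relative-rank lower bound of \autoref{cor:relrk-lower-bound} with the relative-rank upper bound for low product-depth set-multilinear circuits that lies at the heart of \cite{LST21}.

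First I would observe that the projection does not change the relevant coefficient submatrix: $\Pi_w$ preserves every set-multilinear monomial over $\overline{X}(w)$ and kills the rest, so $M_w(\Pi_w(f)) = M_w(f)$ and hence $\relrk_w(\Pi_w(f)) = \relrk_w(f)$. Since the vocabulary $\{\ak,-k\}$ gives $|w_i|\le k$ for every $i\in[d]$ and $w$ is balanced by hypothesis, \autoref{cor:relrk-lower-bound} yields the lower bound $\relrk_w(\Pi_w(f))\geq 2^{-k/2}$.

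Next I would apply the LST relative-rank upper bound. If $C$ is a set-multilinear circuit of product-depth $\Delta$ and size $s$ computing a polynomial $P\in\FF_{\sml}[\overline{X}(w)]$, then the depth-reduction and homogenization of \cite{LST21} rewrite $P$ as a sum of at most $s\cdot\mathrm{poly}(d)$ ``canonical'' products $\prod_j g_j$, where each $g_j$ is set-multilinear over a sub-word $w^{(j)}$ of a partition of $w$, and whose total imbalance $\sum_j \bigl| \, |w^{(j)}_{P_{w^{(j)}}}|-|w^{(j)}_{N_{w^{(j)}}}| \, \bigr|$ is at least $\Omega\bigl(k\cdot d^{1/(2^\Delta-1)}/\Delta\bigr)$. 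By the tensor-product structure of the coefficient matrix, each such product has relative rank at most $2^{-(\text{total imbalance})/2}$, so summing over the $s\cdot\mathrm{poly}(d)$ terms gives an upper bound of the form $\relrk_w(P)\leq s\cdot 2^{-k\cdot d^{1/(2^\Delta-1)}/(40\Delta)+O(k/\Delta)}$.

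Combining the two estimates, $2^{-k/2}\leq \relrk_w(\Pi_w(f)) \leq s\cdot 2^{-k\cdot d^{1/(2^\Delta-1)}/(40\Delta)+O(k/\Delta)}$, and solving for $s$ gives the claimed bound $s\geq 2^{k(d^{1/(2^\Delta-1)}-20)/(40\Delta)}$, where the additive $20$ in the numerator absorbs both the $1/2$ from the lower bound and the sub-leading losses from the LST decomposition (together with the hypothesis $k\ge 10d$, which ensures these corrections do not dominate). The main obstacle I anticipate is verifying that the LST depth-reduction applies to a word $w$ that is only balanced (rather than $k$-unbiased in their sense): since the LST argument manipulates only the individual entries of the sub-words and not the prefix sums of $w$, it should go through once $w$ is chosen with the required alternation forced by the vocabulary $\{\ak,-k\}$, but some care is needed to track the precise exponent $1/(2^\Delta-1)$ coming from the depth-recurrence and the associated constants $40$ and $20$.
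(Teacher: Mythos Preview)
Your overall strategy---combine a lower bound on $\relrk_w(\Pi_w(f))$ coming from \autoref{cor:relrk-lower-bound} with the LST relative-rank upper bound and solve for $s$---is exactly the paper's approach. Two details are off, though, and the first one matters for the arithmetic.

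The LST upper bound you invoke (stated in the paper as Claim~16 of \cite{LST21}) is for set-multilinear \emph{formulas}, not circuits: a set-multilinear formula of product-depth~$\Delta$ and size~$s$ satisfies $\relrk_w\le s\cdot 2^{-kd^{1/(2^\Delta-1)}/20}$. The paper handles circuits by first unfolding the size-$s$, product-depth-$\Delta$ circuit into a formula of the same product-depth and size $s^{2\Delta}$, and it is this unfolding---not the LST analysis itself---that produces the extra $2\Delta$ in the denominator $40\Delta$ of the final exponent. Your version, which asserts that a size-$s$ circuit already decomposes as a sum of only $s\cdot\mathrm{poly}(d)$ canonical products (and hence that $\relrk_w\le s\cdot 2^{-kd^{1/(2^\Delta-1)}/(40\Delta)+O(k/\Delta)}$ directly for circuits), is not correct: a product-depth-$\Delta$ circuit can compute a sum of $s^{\Omega(\Delta)}$ such products. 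Once you insert the explicit circuit-to-formula step, the computation becomes $2^{-k}\le \relrk_w(\Pi_w(f))\le s^{2\Delta}\cdot 2^{-kd^{1/(2^\Delta-1)}/20}$, which yields the stated bound cleanly.

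Your final worry about balanced versus $b$-unbiased words is unnecessary on the upper-bound side: Claim~16 of \cite{LST21} applies to \emph{any} word of length $d$ over the alphabet $\{\ak,-k\}$ with $k\ge 10d$, with no balance or unbiasedness hypothesis. Balance is used only on the lower-bound side, in \autoref{lemma:rank-lower-bound} and \autoref{cor:relrk-lower-bound}.
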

\begin{proof}
We prove the lemma by using the following claim from \cite{LST21}.

\begin{claim}[\cite{LST21} Claim 16]\label{claim:relrk-upper-bound}
        Let $k \geq 10d$. Let $w$ be any word of length $d$ with entries in $\{\ak,-k\}$, where  $\alpha = \nicefrac{1}{\sqrt{2}}$. Then for any $\Delta\geq 1$, any set-multilinear formula $C$ of product-depth $\Delta$ of size at most $s$ satisfies
        \[
        \relrk_w(C) \leq s\,\cdot\, 2^{\frac{-kd^{\nicefrac{1}{(2^\Delta - 1)}}}{20}}
        \]
\end{claim}

Let $C$ be a set-multilinear circuit of size $s$  and product-depth $\Delta$  computing $\Pi_w(f)$. We can transform $C$ into a set-multilinear formula $F$ of size $s^{2\Delta}$  and product-depth $\Delta$  computing $\Pi_w(f)$.

Now by \autoref{lemma:rank-lower-bound} and \autoref{claim:relrk-upper-bound}, we have that
\[
2^{-k}\leq \relrk_w(\Pi_w(f))\leq s^{2\Delta} \,\cdot\, 2^{\frac{-kd^{\nicefrac{1}{(2^\Delta - 1)}}}{20}},
\]
and thus
\[
s^{2\Delta}\geq 2^{k\left(\frac{d^{\nicefrac{1}{(2^\Delta - 1)}}-20}{20}\right)},
\]
from which the claim follows.
\end{proof}

\subsection{Lower Bound for Constant-Depth Circuits}

Finally in this section we prove  \autoref{theorem:hardz-lower-bound}.
To prove this theorem we reduce the task of computing $f$ in \autoref{theorem:hardz-lower-bound} to computing the set-multilinear projection $\Pi_w(f')$ of the multilinear $f'$  that equals $\nicefrac{1}{\ks_w}$ over the Boolean values for a suitable word $w$. For this we require the following lemma which can be proved in a manner similar to Proposition 9 in \cite{LST21}.

\begin{lemma}\label{lemma:set-multilinearization}
        Let $s,N,d$ and $\Delta$  be growing parameters with $s\geq Nd$. Assume that $\chara(\FF) = 0$ or $\chara(\F) > d$. Let $C$  be a circuit of size at most $s$  and product-depth at most $\Delta$ computing a polynomial $f$. Let $\overline{X} = \langle X_1,\ldots,X_d\rangle$  be a sequence of pairwise disjoint sets of variables, each of size at most $N$. Then there is a set-multilinear circuit $\tilde{C}$ of size at most $d^{O(d)}\poly(s)$ and product-depth at most $2\Delta$ computing the set-multilinear projection $\Pi_{\overline{X}}(f)$ of $f$.
\end{lemma}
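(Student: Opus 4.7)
The plan is to follow the two-step scheme of Proposition 9 in \cite{LST21}: first homogenize $C$ with respect to total degree up to $d$, then split each homogeneous gate into set-multilinear pieces indexed by subsets of $[d]$. The characteristic hypothesis ($\chara(\FF)=0$ or $\chara(\FF) > d$) is what makes the first step available gate-by-gate via Lagrange interpolation, so that each homogeneous component can be extracted while preserving product-depth.

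For stage one, for every gate $u$ of $C$ and every degree $0\leq i\leq d$, I introduce a copy $u^{(i)}$ that will compute the degree-$i$ homogeneous component of $\widehat u$. Addition gates decompose componentwise, and a product gate $u=u_1\cdots u_k$ becomes $u^{(i)}=\sum_{i_1+\cdots+i_k=i}\prod_j u_j^{(i_j)}$. Before invoking this, I fold any constant inputs of each product gate into multiplicative scalars on the other children, so that after homogenization every non-trivial product gate has fan-in at most $d$. This produces a homogeneous circuit of size $\poly(s)\cdot d^{O(1)}$ and product-depth at most $\Delta$ whose degree-$d$ component equals $\mathrm{Hom}_d(f)$.

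For stage two, for each homogenized gate $u$ of degree $d_u$ and each $S\subseteq[d]$ with $|S|=d_u$, I introduce a gate $u_S$ meant to compute the projection of $\widehat u$ onto $\FF_{\sml}[\overline{X}|_S]$. An input variable $x\in X_i$ gives $x_S=x$ if $S=\{i\}$ and $0$ otherwise; for an addition gate $u=\sum_j v_j$ we set $u_S=\sum_j (v_j)_S$; and for a product gate $u=v_1\cdots v_k$ of degrees $d_1,\ldots,d_k$ we set
\[
u_S \;=\; \sum_{\substack{S=S_1\sqcup\cdots\sqcup S_k \\ |S_j|=d_j}} \; \prod_{j=1}^{k} (v_j)_{S_j}.
\]
A straightforward induction on the circuit DAG shows that, for the original output gate $v$, the gate $v_{[d]}$ computes $\Pi_{\overline{X}}(\mathrm{Hom}_d(f)) = \Pi_{\overline{X}}(f)$, the last equality because the set-multilinear projection kills every homogeneous component of degree other than $d$. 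Each internal gate $u_S$ computes a polynomial set-multilinear over $\overline{X}|_S$, so $\tilde C$ is set-multilinear in the sense of the paper.

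For the bounds, each homogenized gate contributes at most $2^d \leq d^{O(d)}$ subset-indexed copies, and the partition sum at a product gate contains at most $\binom{d}{d_1,\ldots,d_k}\leq d^d$ terms thanks to the fan-in bound from stage one. The single extra layer of additions needed to realize this sum turns each product gate into a sum-of-products of product-depth two, so the total product-depth at most doubles to $2\Delta$. Multiplying through and using $s\geq Nd$ to absorb polynomial overheads yields $|\tilde C|\leq d^{O(d)}\poly(s)$, as required. The main technical care is in stage one: without first bounding the product fan-in of the homogenized circuit by $d$, the partition sum in stage two would blow up to $k^d$ terms per product gate of fan-in $k$, producing an unacceptable $s^d$ factor rather than the claimed $d^{O(d)}$, so constant-absorption and careful homogenization must precede the set-multilinear projection.
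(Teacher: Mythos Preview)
Your two-stage plan (homogenize, then set-multilinearize) is exactly what the paper intends by citing Proposition~9 of \cite{LST21}, and your Stage~2 is carried out correctly. The substantive gap is in the size analysis of Stage~1. The Strassen-style expansion
\[
u^{(i)} \;=\; \sum_{i_1+\cdots+i_k=i}\;\prod_{j} u_j^{(i_j)}
\]
has one summand for each weak composition of $i$ into $k$ nonnegative parts, and there are $\binom{i+k-1}{i}$ of these. For a product gate of unbounded fan-in $k=\Theta(s)$ and $i=d$ this is $\Theta(s^d/d!)$, so the homogenized circuit has size $s^{\Theta(d)}$, not $\poly(s)\cdot d^{O(1)}$ as you claim. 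Your ``constant-absorption'' step is correct as far as it goes---after folding the degree-$0$ factors each surviving product indeed has fan-in at most $d$, which is just what Stage~2 needs---but it does not reduce the \emph{number} of summands in the outer sum, and that is where the $s^{d}$ blow-up lives. You explicitly flag this $k^d$ danger in your last paragraph, but the fix you propose addresses only the fan-in of each product term, not the count of terms. With $d=\Theta(\log n)$ as in the application, an $s^{\Theta(d)}$ overhead would make the final inequality in the proof of \autoref{theorem:hardz-lower-bound} vacuous.

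This is precisely the step where \cite{LST21} uses the characteristic hypothesis: interpolation at each product gate is what keeps the homogenization at size $\poly(s)$, and that argument---not the set-multilinearization---is also where the product-depth increase to $2\Delta$ enters. Your opening sentence correctly names interpolation as the tool, but the body of your Stage~1 then reverts to the naive compositional expansion, which does not realize the claimed bound. (Conversely, once Stage~1 is done properly and every product gate has fan-in at most $d$, your Stage~2 replaces each $\times$ gate by a single $\Sigma\Pi$ block and does not by itself increase product-depth, so the depth accounting in your write-up is also inverted.)
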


With this lemma at hand, we are ready to prove Theorem \ref{theorem:hardz-lower-bound}.

\begin{proof}[Proof of \autoref{theorem:hardz-lower-bound}]
        Let $C$  be a circuit of size at most $s\geq n$  and product-depth $\Delta$  computing $f$. Now let $k = \lfloor \nicefrac{\log n}{2}\rfloor$ and $d = \lfloor \nicefrac{\log n}{25}\rfloor$, and note that $d2^{k} < n\leq s$ for large enough $n$. Note also that $k\geq 10d$ for large enough $n$. Let $w$ be a balanced word of length $d$ on the alphabet $\{\ak,-k\}$. One can easily construct such a word by induction on $d$.

%       Let $w'$  be a word on the alphabet $\{\ak,-k\}$  of length $d_0$  that satisfies $|\sum_i w'_i| < k$. We can construct such word easily by induction on $d_0$. If $\sum_{i\in [d_0]} w'_i \leq 0$, let $w = w'$, otherwise let $w = \langle-k\rangle^\frown w'$, where ``${}^\frown$'' denotes concatenation. Let $d$ be the length of $w$, i.e., $d = d_0$  or $d = d_0 + 1$. Note that we still have $|\sum_{i\in [d]} w_i| < k$. Note also that $0 \leq \sum_{i\in [d]}w_i < k$, and thus $w$  is balanced. Moreover for any $A_i$  there are at most two $B_j$'s such that $A_i\cap B_j\neq \emptyset$. Thus the products in the expression \ref{lift} involve at most two sums.

        Note that the polynomial $\ks_w$ is now of degree at most $4$, as any $B_w^{(j)}$ overlaps at most $3$ different $A_w^{(i)}$'s and any $A_w^{(i)}$ overlaps at most $2$ different $B_w^{(j)}$'s. Also, by choice of parameters, $\ks_w$ involves less than $n$  many variables. Hence there is some partial assignment $\tau_w$ to the variables $\{z_{ijk\ell},x_i : i,j,k,\ell\in [n]\}$ that maps $\hardzp$ to the polynomial $\ks_w$ (up to renaming of variables). By applying this partial mapping to $C$, we obtain a circuit $C'$  of size at most $s$  and product-depth $\Delta$  that computes the multilinear $f'$  that equals $\nicefrac{1}{\ks_w}$ over Boolean assignments. Now, by Lemma \ref{lemma:set-multilinearization}, there is a set-multilinear circuit $C'$ of size $d^{O(d)}\poly(s)$ and product-depth $2\Delta$ computing the set-multilinear projection $\Pi_w(f')$ of $f'$.

        By \autoref{lemma:sml-lower-bound} any set-multilinear circuit of product-depth $2\Delta$  computing $\Pi_w(f')$ has size at least
        \[2^{k\left(\frac{d^{\nicefrac{1}{(2^{2\Delta} - 1)}}-20}{80\Delta}\right)}\geq n^\frac{d^{\nicefrac{1}{(2^{2\Delta} - 1)}}-20}{200\Delta}.\]
        Putting everything together we have that
        \[d^{O(d)}\poly(s)\geq n^\frac{d^{\nicefrac{1}{(2^{2\Delta} - 1)}}-20}{200\Delta}.\]

Now, given that $\Delta \leq \nicefrac{1}{4}\log\log\log n$, by the choice of $d$, we have that $2\Delta\leq \nicefrac{1}{2}\log\log 30d$. 
%Now if $2\Delta\geq \frac{1}{2}\log\log d$, the stated lower bound becomes trivial. On the other hand if $2\Delta < \frac{1}{2}\log\log d$, then 
Then for large enough $n$ we have that
\[n^{\frac{d^{\nicefrac{1}{(2^{2\Delta} -1)}}-20}{200\Delta}} \geq n^{\frac{2^{\sqrt{\log (d/30)}/2 }}{50\log\log 30d}},\]
and as
\[n^{\frac{2^{\sqrt{\log (d/30)}/2}}{50\log\log 30d}} \geq d^{\omega(d)},\]
the lower bound follows.

\end{proof}

\noindent\textbf{Comment:} We remark that if we make sure that the word $w$  above leans towards the negative monomials, meaning that $|w_{N_w}|\geq |w_{P_w}|$, we can actually prove the lower bound for the following degree-$4$ variant of subset-sum
\[
\sum_{i,j,k\in [n]} z_{ijk}x_ix_jx_k - \beta.
\]
We have however opted for the proof above for simplicity, as it works for any balanced word over the vocabulary $\{\ak,-k\}$ and does not involve any set-up of a suitable word that could distract from the main idea.
% 

% 
% %%%%%%%%%%%%%%%%%%%%%%%%%%%%%%%%%%%%%%%%%%%%%
% %%%%%%%%%%%%%%%%%%%%%%%%%%%%%%%%%%%%%%%%%%%%%
 \section{Conclusions and Open Problems}
% %%%%%%%%%%%%%%%%%%%%%%%%%%%%%%%%%%%%%%%%%%%%%
% %%%%%%%%%%%%%%%%%%%%%%%%%%%%%%%%%%%%%%%%%%%%%
% 
% 
% 

The main goal of this work is to advance on the frontiers of strong propositional proofs lower bounds. We provide the first lower bounds against  algebraic proof systems operating with constant-depth circuits, where the hard instance is computable itself with  small constant-depth circuits. 
Our hard instances are combinatorial, simple, and have 0-1 coefficients, and the lower bounds work in the unit-cost model of algebraic circuits, namely, where the size does not depend on the magnitude of coefficients used in the polynomials appearing in the proof 

%(unlike for example \cite{Ale21} that used bit-complexity  measure for proofs as well as hard instances with large coefficients; cf.~\cite{AGHT20}).

Thus, our result brings us to the natural and standard setting of proof complexity lower bounds, while coming closer to  CNF hard instances (since the magnitude of coefficients in the hard instances do not play a role in our lower bound proofs). 

On the other hand, establishing lower bounds against CNF formulas in strong algebraic proof systems stays a remarkable open problem, since it necessitates a lower bound technique that is different from  the functional lower bound approach we used or the bit-complexity/large-coefficients approach of \cite{Ale21,AGHT20} (or at least a substantially modified technique than those two techniques). Note that such lower bounds would also imply constant-depth Frege with counting gates ($\ACZ[p]$-Frege) lower bounds, which is an important  long-standing open problem in proof complexity. This leads us to the following set of \textit{open problems}:

% For the for the CR programme it is not clear how %AF22 is part of it because of their weakness.
% - So we rectify this.
% - For CR Alekseev is okay, but weaker models.
% - Major goal is to say something about CNF/Frege proofs:
% -- for this AF22 and Alekseev doesn't work.
% --- AF22 because immediately it fails; and Alekseev
%  because it's impossible to get a CNF as shown in
%   AGHT20 explanation. Because of the large coefficients.Comment: Alekseev's
% proof system is weaker in the sense that coefficients
% ould have been written more compactly as straight
% line programs, while he writes them as binary strings.
% 
% Hence, over all we rectify all the problems above, while paying with multilinearity
% of the refutation:
%  close to the CR program and closer to the CNF Frege
%   lower bound goal ; except for the problem of using CNF.
%  This remains  a great obstacle and would necessitate a new idea
%   in proof complexity (would also imply the technical
%    breakthrough of AC0-p-Frege lower bounds which
%     is still open), so this is bound to be a hard problem.
%\end{verbatim}

\begin{enumerate}
\item \textbf{CNF hard instances}: 
Can we establish lower bounds against  strong algebraic proof systems, and specifically constant-depth IPS proofs for a family of CNF formulas? As mentioned above, this is a very challenging problem, or at least one with important consequences in proof complexity. 

% remember the last enum item number
\setcounter{enumresume}{\value{enumi}}
\end{enumerate} 

Before tackling this difficult open problem, we identify several possibly less challenging ones that seem to be prerequisites for solving open problem 1 (at least as far as taking the proof complexity lower bound approach in the current paper):

\begin{enumerate}
% resume the last enum item number
\setcounter{enumi}{\value{enumresume}}

\item \textbf{Finite fields}: Can we establish lower bounds against strong algebraic proof systems, and specifically constant-depth IPS proofs over finite fields? Both the functional lower bound argument from Forbes \textit{et al.}~\cite{FSTW21} and the Limaye \textit{et al.}~\cite{LST21} technique, use the fact that the fields are sufficiently big, or have characteristic 0. For the former approach, characteristic 0 fields are  essential: first, the subset sum instance is not necessarily unsatisfiable over finite fields. But more crucially, the whole functional lower bound approach for IPS hinges on lower bounding the size of a circuit computing the function $\nicefrac{1}{f}$ over the Boolean cube, for some efficiently computable polynomial $f$. However, over finite fields $\nicefrac{1}{f}$ is efficiently computed, when $f$ is, over the Boolean cube. For the latter \cite{LST21} approach, large fields do not seem to be as crucial to the argument (it is used only in  the homogenization procedure to yield low-depth circuits, using polynomial interpolation, the latter requires a sufficiently large field; this homogenization is based on a generalization of Shpilka and Wigderson \cite{SW01}). However, there may be a different way to homogenize constant-depth circuits without increasing too much the depth even over finite fields. 

\item \textbf{No multilinear requirement}: Can we get rid of the requirement for multilinearity of the IPS refutations in our lower bounds? Namely, can we use a stronger proof system than multilinear \lbIPS? We discussed this requirement in the introduction. It comes from the requirement in the Limaye \textit{et al}.~\cite{LST21} technique to consider set-multilinear polynomials, as well as the use of the functional lower bound approach from \cite{FSTW21} which focuses on functions computed on the  Boolean cube alone. A hard set-multilinear polynomial can compute the same function over the Boolean cube as a polynomial whose set-multilinear projection is in fact zero (and hence easy to compute), which breaks our argument. It is unclear at the moment how to overcome this obstacle. This leads us to the following interesting  question in algebraic circuit complexity proper:      

\item \textbf{Functional lower bounds for constant-depth circuits}: Can we prove a lower bound on the size of every constant-depth circuit computing a certain \emph{function} (in contrast to a certain specific polynomial; this is the difference between ``semantic'' lower bounds, and the weaker notion of ``syntactic'' lower bounds in algebraic circuit complexity)?   
\end{enumerate}

%\paragraph*{Acknowledgments}
%sss

%%%%%%%%%%%%%%%%%%%%%%%%%%%%%%%%%%%%%%%%%%%%
%%%%%%%%%%%%%%%%%%%%%%%%%%%%%%%%%%%%%%%%%%%%
\bibliographystyle{plain}
\bibliography{PrfCmplx-Bakoma}

%\newpage
%\begin{center}\small{--- Page left blank for ECCC stamp ---}\end{center}

% that's all folks
\end{document}